\documentclass[11pt]{article}

\usepackage{titling}
\usepackage{authblk}
\usepackage{titlesec}

\usepackage{amsmath}               
\usepackage{amsfonts}              
\usepackage{amsthm}                
\usepackage{bm}
\usepackage{stmaryrd}
\usepackage{float}
\usepackage{changes}
\usepackage{enumitem}
\usepackage{orcidlink}

\newtheorem{theorem}{Theorem}[section]
\newtheorem{definition}{Definition}[section]

\newtheorem{proposition}{Proposition}[section]

\newtheorem{lemma}{Lemma}[section]
\newtheorem{remark}{Remark}[section]

\titleformat*{\section}{\Large\bfseries}
\titleformat*{\subsection}{\large\bfseries}
\titleformat*{\subsubsection}{\large\bfseries}
\titleformat*{\paragraph}{\large\bfseries}
\titleformat*{\subparagraph}{\large\bfseries}

\newcommand{\R}{\mathbb{R}}

\allowdisplaybreaks

\title{ Exact moment models for conservation laws in phase space}
\author[1,*]{Tileuzhan Mukhamet \orcidlink{0000-0003-3811-1040}}
\author[1]{Katharina Kormann \orcidlink{0000-0003-1956-2073}}
\affil[1]{Faculty of Mathematics, Ruhr-University Bochum, Germany}
\date{\today}

\begin{document}
\maketitle	

\noindent \textbf{Abstract:} Moment equations offer a compelling alternative to the kinetic description of plasmas, gases, and liquids. Their simulation requires fewer degrees of freedom than phase space models, yet it can still incorporate kinetic effects to a certain extent. To derive moment equations, we use a parameterization of the distribution function using centered moments, as proposed by Burby. This yields moment equations for which the parameterized distribution function exactly solves the hyperbolic conservation law. Similarly, a particle model is derived based on a parametrization of the distribution function using phase space moments. Finally, we present the application of the method to the non-relativistic and relativistic Vlasov--Maxwell equations.
\\ \\
\noindent \textbf{Mathematics Subject Classification}. 35L65, 35Q49, 35Q83
\\ \\
\noindent \textbf{Keywords}: Moment Models, Fluids, Exact Models, Hybrid, Particles, Conservation Laws 




\section{Introduction} \label{sec:preliminaries}
Six-dimensional equations that model the evolution of a probability density function in phase space are often found in physics -- such as the Vlasov equation \cite{marsden1982hamiltonian}, the Boltzmann equation \cite{levermore}, and the radiative transfer equation \cite{pomraning2005equations,KOCH2004423}. These models can be written in  conservative form. Let $g(\bm u, \bm x, t)$ be the unknown function of position, velocity, and time, defined on a domain $\Omega := \Omega_u \times \Omega_x \times (0,T)$. We consider a kinetic equation of the generic form
\begin{align} \label{eq:general_advection_non_conserv}
\partial_t g(\bm u, \bm x, t) +  \nabla \cdot (\bm G(\bm u, \bm x, \bm t) g(\bm u, \bm x, t)) +  \nabla_u \cdot \left( \bm F(\bm u, \bm x, t) g(\bm u, \bm x, t) \right) + R(\bm u, \bm x, t) g(\bm u, \bm x, t)= 0
\end{align}
here, $\bm x \in \Omega_x \subset \mathbb{R}^{d}$ denotes the position, $\bm u \in \Omega_u \subset \mathbb{R}^{d}$ the momentum,  and $t \in (0, T) \subset \mathbb{R}^+$ the time. The operators  $\nabla$ and $\nabla_u$ represent the gradients with respect to $\bm x$ and $\bm u$, respectively.  

Let $C^k(\Omega)^d$ be the space of $k$ times continuously differentiable functions on $\Omega$ with an image in $\mathbb{R}^d$; and in particular let $C^{\infty}(\Omega)^d$ be the space of smooth functions with any number of derivatives. We assume that the advection coefficients $\bm G(\bm u, \bm x, t), \bm F(\bm u, \bm x ,t)$ and the source $R(\bm u, \bm x, t)$ are smooth functions, i.e., $\bm G(\bm u, \bm x, t), \bm F(\bm u, \bm x ,t) \in C^{\infty}(\Omega)^d$ and $R(\bm u, \bm x, t) \in C^{\infty}(\Omega)$.  These coefficients may depend on the unknown $g(\bm u, \bm x, t)$ itself. To relate the coefficients to $g(\bm u, \bm x, t)$,  the dependence is expressed by a system of differential-algebraic equations $\bm {\mathcal A}$:
\begin{align}
\bm {\mathcal A}[\bm G(\bm u, \bm x, t), \bm F(\bm u, \bm x, t), R(\bm u, \bm x, t), g(\bm u, \bm x, t)]=0
\end{align}
For example, for the Vlasov equation the system $\bm {\mathcal A}$ is given by the Maxwell equations that evolves the electric and magnetic fields.

Solving equation \eqref{eq:general_advection_non_conserv} directly is expensive due to its high dimensionality. Instead, moment-based models are commonly used. These models solve for the moments of the distribution function. The degree $k$ fluid moment of $g(\bm u, \bm x, t)$ is defined by
\begin{align} \label{eq:non-centered-moments}
\bm M^k (\bm x, t) = \int \bm u^{\otimes k} \, g(\bm x, \bm u, t) du
\end{align} 
here $\otimes k$ denotes the $k$-fold outer product, i.e., $\bm u^{\otimes k} = \underbrace{\bm u \otimes ... \otimes \bm u}_{k-times}$. 

Centered moments are also common. Let $\bm v(\bm x, t) \in C^{\infty}(\Omega_x \times [0,T])^d$ be a variable that defines the center of the moments. The degree $k$ centered moments with respect to the center $\bm v(\bm x, t)$ is defined as follows
\begin{align} \label{eq:centered-moments}
\bm C^k(\bm x, t) = \int  (\bm u - \bm v(\bm x, t))^{\otimes k}  g(\bm u, \bm x, t) du
\end{align}
For example, the centered moments of degree zero, one, and two are defined by
\begin{align} \label{eq:moments}
n(\bm x, t) &=\int g(\bm u, \bm x, t) du \\
\bm P(\bm x, t) &= \int (\bm u -  \bm v(\bm x, t)) g(\bm u, \bm x, t) du \\
\bm S(\bm x, t) &= \int (\bm u -  \bm v(\bm x, t)) \otimes  (\bm u - \bm v(\bm x, t)) g(\bm u, \bm x, t) du
\end{align}

Moment based models are typically derived by taking moments of the kinetic equation which leads to a hierarchy of moments. These models are cheaper to solve, since moments are defined on the configuration space. Moment based models, however, require a closure to be complete, since the moment equations contain moments of higher order depending on how the coefficients depend on $\bm u$. For instance, for a linear dependence of $\bm{G}$ on $\bm u$, the next higher moment occurs when integrating the kinetic equation.

Classical closures approximate specific kinetic effects, such as Landau damping \cite{Hammett,hunana2019introductory}. Despite their wide-spread use, these closures have a limited range of applicability as they are not designed to capture fine-scale physics in general settings. Moreover, they introduce non-physical effects such as artificial cooling or heating and in general lack the invariants of the underlying kinetic model. 

Several authors proposed closures that preserve some of the invariants of the kinetic model. In \cite{issan}, closures based on symmetrically weighted Hermite spectral expansions were developed for the Vlasov--Poisson system. 
The most stable closure in this approach sets the unknown moment to zero; it preserves $L_2$ norm, hyperbolicity, and antisymmetry; with even order closures also preserving momentum and mass, and odd degree closures preserving energy.

Burby \cite{burby2023} proposed an ansatz that parametrizes the distribution $g(\bm u, \bm x, t)$ in terms of centered moments as follows:
\begin{align} \label{eq:burby_ansatz}
&g_{k}(\bm u, \bm x, t) = C^0(\bm x, t) \delta(\bm u -  \bm v(\bm x, t)) - C^1_{i_1} (\bm x, t) \partial_{u_{i_1}} \delta (\bm u -  \bm v(\bm x, t)) \\ \notag
& + \frac{1}{2} C^2_{i_1 i_2} (\bm x, t) \partial_{u_{i_1}} \partial_{u_{i_2}} \delta (\bm u -  \bm v(\bm x, t)) + ... + \frac{(-1)^{k}}{k!} C^{k}_{i_1 ... i_{k}} (\bm x, t) \partial_{u_{i_1}} ... \partial_{u_{i_{k}}} \delta (\bm u - \bm v(\bm x, t))
\end{align}
This ansatz has several peculiar properties. First, the centered moments of $g_k$ exactly reproduce the centered moments up to degree $k$ of any distribution. For this reason, the ansatz is rich enough to reproduce the first $k$-moments of an arbitrary probability density function. 

In \cite{burby2023}, the ansatz was used to derive exact reduced moment-models for the Vlasov--Poisson model using Lie-theoretic reduction methods. A key component of this construction -- originating from the work of Scovel and Weinstein \cite{scovel} -- is the introduction of an additional space- and time-dependent variable that defines the center of the moments $\bm v(\bm x, t)$.

In this work, we show that the ansatz function \eqref{eq:burby_ansatz} leads to exact moment models for a gernal class of conservation laws. This is achieved by taking a special choice for the redundant variable $\bm v(\bm x, t)$ that describes the center of the moments. We prove the results for a finite collection of moments in arbitrary dimensions using a direct proof.

Reduced-order models can also be constructed using phase space moments of the distribution function. This idea was studied in \cite{burby2025}, where a peculiar ansatz that incorporates phase space moments was considered. The associated exact reduction procedure guarantees that the ansatz function remains an exact solution of the Vlasov equation. 

It is important to mention that moment models based on the ansatz \eqref{eq:burby_ansatz} may not be stable. These models are weak solutions of the Vlasov equation and may contain non-physical solutions that are unstable. Furthermore, the analysis in \cite{burby2023} revealed that the Vlasov--Poisson fluid model of degree two is ill-posed. In order to deal with the ill-posedness and obtain physical solutions reduction of the model to a center manifold is needed. As highlighted in \cite{burby2023}, the issue is similar to the Abraham--Lorentz--Dirac equation that was reduced to the center manifold by Spohn \cite{spohn}. The special relativistic Vlasov--Maxwell models may require similar reduction to the center manifolds. In general, the reduction to the invariant manifolds maybe  carried out analytically using asymptotic techniques \cite{BURBY2020105289, burby2017magnetohydrodynamic} or numerically with finite-elements \cite{gonzalez2022finite}. The well-posedness of particle discretization is not well-researched and remains an open problem.

The paper is organized as follows. Section \ref{sec:fluids} presents the main results of this work, an exact fluid model for a general conservation law and its conservation properties. Theorem \ref{prop:main_theorem} is the main theorem and proves exactness for the reduced fluid models. In Section \ref{sec:particles}, we derive in a similar way extended particle models that include phase space moments based on the ansatz in \cite{burby2025}. Theorem \ref{prop:main_theorem_particles} proves exactness for the particle models. In Section \ref{sec:application}, we apply these models to the non-relativistic and relativistic Vlasov--Maxwell systems. We show that the models preserve invariants including mass, momentum, and energy. 

To ease the notation, we always omit the arguments of functions. For the advection coefficients, we retain only the first argument and so write, for example, $\bm G(\bm u)$ instead of $\bm G(\bm u, \bm x, t)$. 
 We also introduce a special gradient $\nabla_v$ that is defined with the help of the Fr\'echet derivative $\frac{d}{d \varepsilon} \big|_{0} \bm G( \bm v + \varepsilon \delta \bm v) = \delta \bm v  \cdot \nabla_{\bm v} \bm G( \bm v)$.

Let $\bm e_i$ denote the $i$-th coordinate vector. We define gradients from the left and from the right as follows: $\nabla \bm v := \partial_i v_j \bm e_i \otimes \bm e_j$ and $(\bm v \nabla) = \partial_j v_i \bm e_i \otimes \bm e_j$. Similarly, for tensors we define $\nabla \bm S := \partial_i S_{mn} \bm e_i \otimes \bm e_m \otimes \bm e_n$ and $(\bm S \nabla) := \partial_i S_{mn} \bm e_m \otimes \bm e_n \otimes \bm e_i$. We also define the left and right divergence as $\nabla \cdot \bm S =  \partial_i S_{ij} \bm e_j$ and $(\bm S \cdot \nabla) =  \partial_j S_{ij} \bm e_i$. For order two tensors, we define cross products: $\bm S \times^1 \bm B=  \epsilon_{imn} S_{mj} B_n \bm e_i \otimes \bm e_j$ and $\bm S \times^2 \bm B=  \epsilon_{jmn} S_{im} B_n \bm e_i \otimes \bm e_j$, where $\epsilon_{ijk}$ is the permutation symbol.

With the ansatz \eqref{eq:burby_ansatz}, we work with distributions rather than functions.  The distributions $a(\bm u, \bm x, t)$ and $b(\bm u, \bm x, t)$ are considered equivalent if, for all $\phi(\bm x, \bm u, t) \in C^{\infty}_0(\Omega)$ (the space of smooth test functions with compact support), the following holds: 
$ \left<a, \phi \right> = \left< b, \phi \right>$ where $\left<a, \phi  \right>:=\int_{\Omega} a \, \phi \, dV $ is the $L_2$ inner product and $dV =dt \, dx \, du$ is the volume element.

\section{Fluid models for conservation laws} \label{sec:fluids}
To construct a reduced-order model of degree $k$, we take centered moments \eqref{eq:centered-moments} of the conservation law  \eqref{eq:general_advection_non_conserv}. To this end, we multiply the advection equation \eqref{eq:general_advection_non_conserv} with $(\bm u - \bm v)^{\otimes k}$ and integrate over the momentum coordinate
\begin{align} 
\int_{\Omega_u} \partial_t g \, (\bm u-  \bm v)^{\otimes k} \, du &= - \int_{\Omega_u}  \nabla \cdot \left( \bm G(\bm u) \, g \right) \,  (\bm u- \bm v)^{\otimes k} \, du \\ \notag
&- \int_{\Omega_u} \nabla_u \cdot \left( \bm F(\bm u) g\right) \,  (\bm u- \bm v)^{\otimes k}   \, du - \int_{\Omega_u} \bm R(\bm u) \, g \,  (\bm u- \bm v)^{\otimes k}   \, du \, .
\end{align}
Next, we use the product rule $\partial_t g \, (\bm u- \bm v)^{\otimes k} = \partial_t \left[ g \, (\bm u- \bm v)^{\otimes k}  \right] - g \, \partial_t \left[(\bm u- \bm v)^{\otimes k} \right] $ to obtain evolution equations for the centered moments
\begin{align} \notag
\partial_t  C^0 &= - \int_{\Omega_u}  \nabla \cdot \left( \bm G(\bm u) \, g \right) \, du 
- \int_{\Omega_u} \nabla_u \cdot \left( \bm F(\bm u) g\right)    \, du - \int_{\Omega_u} \bm R(\bm u) \, g   \, du \\ \notag
\vdots \\  \label{eq:centered_hierarchy}
\partial_t \bm C^k &=\int_{\Omega_u} g \, \partial_t \left[(\bm u- \bm v)^{\otimes k} \right] \, du  - \int_{\Omega_u}  \nabla \cdot \left( \bm G(\bm u) \, g \right) \,  (\bm u- \bm v)^{\otimes k} \, du \\ \notag
&- \int_{\Omega_u} \nabla_u \cdot \left( \bm F(\bm u) g\right) \,  (\bm u- \bm v)^{\otimes k}   \, du - \int_{\Omega_u} \bm R(\bm u) \, g \,  (\bm u- \bm v)^{\otimes k}   \, du 
\end{align} 
The system \eqref{eq:centered_hierarchy} for a fixed number $k$ of moments is not closed without an assumption on $g$, since each equation cannot be represented entirely in terms of a finite number of moments.

To close the system, we assume now that $g=g_k$ given in the form \eqref{eq:burby_ansatz} of degree $k$. Then, we also need an equation  equation for the center $\bm v$. We return to this question in Theorem \ref{prop:main_theorem}.

The following definition introduces the notation of exactness for a moment system which will be the central property in the remainder of the paper:
\begin{definition} \label{def:exactness}
A moment system is called \emph{exact} if the function $g_k$ parametrized by the moments as defined in \eqref{eq:burby_ansatz} satisfies the distributional version of the conservation law.
\begin{align} 
\int \partial_t g_k \, \phi dV  +  \int \nabla \cdot (\bm G \, g_k) \, \phi \, dV +  \int \nabla_u \cdot \left( \bm F \, g_k \right) \, \phi \, dV + \int R\, g_k \, \phi \, dV = 0 \hspace{0.5cm} \forall \phi \in C^{\infty}_0(\Omega)
\end{align}
where the coefficients are evaluated from $\mathcal{A}[\bm G,\bm F, R, g_k] = 0$.
\end{definition}

Before stating the main results, we examine the conditions for the exactness of the moment system.

\begin{definition} \label{def:e-tube}

	Let  $ \bm u \in C^k(\Omega_x \times [0,T],\R^{d})$.  Then, we define the \emph{$\varepsilon$-tube} around the codomain of $\bm u$ as
	$
	T_{\varepsilon}(\bm u) := \{ y \in  \R^{d} \mid \exists \bm x \in \Omega_x \text{ and } t \in [0,T] \text{ such that } \| \bm y-\bm u(\bm x, t)\| \leq \varepsilon\}.
	$ 
\end{definition}

\begin{lemma} \label{prop:lemma_delta_fluid}
Let $f(z,t)$ be a smooth real-valued function with compact support. Let $\phi \in C^{\infty}_0(\Omega_u)$ s.t. $\phi(\bm u)=1$ for all $\bm u \in T_{\varepsilon} (\bm v(\bm x, t))$ for some $\varepsilon>0$. Then, the following identity holds
\begin{align}
\int \partial_{u_i} \delta(\bm u-\bm v(\bm x, t)) f(\bm u, \bm x, t) \, \phi(\bm u) \, du = \int \partial_{u_i} \delta(\bm u- \bm v(\bm x, t)) f(\bm u, \bm x, t) \, du 
\end{align}
\end{lemma}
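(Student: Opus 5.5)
The plan is to evaluate both sides directly from the definition of the distributional derivative of the Dirac delta, and then to use that $\phi$ is identically one on a neighbourhood of the point $\bm u = \bm v(\bm x, t)$.

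First, for fixed $(\bm x, t)$, read the integral as the action of $\partial_{u_i}\delta(\cdot - \bm v(\bm x,t))$ on a smooth function of $\bm u$. Integration by parts in the distributional sense gives
\begin{align*}
\int \partial_{u_i} \delta(\bm u-\bm v) \, h(\bm u) \, du = - \partial_{u_i} h(\bm u)\big|_{\bm u = \bm v(\bm x, t)}
\end{align*}
for any smooth $h$. Apply this with $h = f\phi$ on the left and $h = f$ on the right. On the right this is legitimate because $f$ is smooth with compact support, so it is itself an admissible test function; on the left $f\phi$ is compactly supported because $\phi \in C^\infty_0(\Omega_u)$.

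The left-hand side therefore equals
\begin{align*}
-\left[\partial_{u_i} f(\bm v, \bm x, t)\, \phi(\bm v) + f(\bm v, \bm x, t)\, \partial_{u_i}\phi(\bm v)\right].
\end{align*}
The key point is that the closed ball of radius $\varepsilon$ around $\bm v(\bm x,t)$ lies in $T_\varepsilon(\bm v)$ by Definition \ref{def:e-tube}. Hence $\phi \equiv 1$ on an open neighbourhood of $\bm v(\bm x,t)$, which gives $\phi(\bm v) = 1$ and $\partial_{u_i}\phi(\bm v) = 0$. The left-hand side thus reduces to $-\partial_{u_i} f(\bm v,\bm x,t)$, which is exactly the right-hand side.

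The only delicate step is the vanishing of the derivative of $\phi$. Knowing $\phi = 1$ only at the single point $\bm v$ would not be enough; one needs $\phi$ constant on an open set containing $\bm v(\bm x,t)$. The tube hypothesis with $\varepsilon>0$ provides this uniformly in $(\bm x,t)$, so the identity holds pointwise in $(\bm x,t)$. It therefore also holds after any further integration against test functions in $(\bm x,t)$, which is how the lemma will be used later.
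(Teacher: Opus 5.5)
Your proof is correct and takes essentially the same route as the paper's. Both move the derivative off the delta, apply the product rule to $f\phi$, and drop the $\partial_{u_i}\phi$ term because $\phi \equiv 1$ on the $\varepsilon$-ball around $\bm v(\bm x,t)$ inside $T_\varepsilon(\bm v)$; you are slightly more explicit than the paper in noting that $\phi(\bm v)=1$ is also used and that constancy on an open neighbourhood, not just at the point, is what gives $\partial_{u_i}\phi(\bm v)=0$.
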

\begin{proof}
Since $\phi(\bm u)=1$ for $\bm u$ in $T_{\varepsilon} (\bm v(\bm x, t))$, we have that $\partial_{u_i} \phi(\bm u) |_{\bm u=  \bm v} = 0$. Using integration by parts $\int \partial_{u_i} \delta(\bm u-  \bm v) f(\bm u, \bm x, t) \, \phi(\bm u) \, du = -\int \delta(\bm u-  \bm v) \partial_{u_i}  f(\bm u, \bm x, t)  \phi(\bm u) \, du - \underbrace{ \int \delta(\bm u-  \bm v)   f(\bm u, \bm x, t)  \partial_{u_i} \phi(\bm u) \, du}_{=0} =  -\partial_{u_i}  f(\bm u, \bm x, t) |_{u=  \bm v} = -\int \delta(\bm u-  \bm v) \partial_{u_i}  f(\bm u, \bm x, t) \, du = \int \partial_{u_i} \delta(\bm u -   \bm v) f(\bm u, \bm x, t) \, du$.
\end{proof}
In the case of multiple derivatives, we obtain similarly
\begin{align} \label{eq:fluid_multiple_derivatives}
\int \partial_{u_i}...\partial_{u_j} \delta(\bm u- \bm v) f(\bm u, \bm x, t) \, \phi(\bm u) \, du = \int \partial_{u_i}...\partial_{u_j} \delta(\bm u -   \bm v) f(\bm u, \bm x, t) \, du
\end{align}

Let us first look at the necessary condition for the moment system to be exact.

\begin{proposition} \label{prop:fluid_necessary}
Consider a moment system ($C^0, \bm C^1,...,\bm C^k$) and its corresponding distribution \eqref{eq:burby_ansatz}. If the moment system is exact, then it satisfies the equations \eqref{eq:centered_hierarchy}.
\end{proposition}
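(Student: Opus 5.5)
The plan is to test the distributional identity of Definition \ref{def:exactness} against test functions of product form $\phi(\bm u,\bm x,t)=\psi(\bm x,t)\,\chi(\bm u)\,(\bm u-\bm v(\bm x,t))^{\otimes j}$ for each $j=0,\dots,k$. Here $\psi\in C^\infty_0(\Omega_x\times(0,T))$ is arbitrary, and $\chi\in C^\infty_0(\Omega_u)$ is a cutoff with $\chi\equiv1$ on $T_\varepsilon(\bm v)$. The cutoff makes $\phi$ admissible even though $(\bm u-\bm v)^{\otimes j}$ is not compactly supported. Tensor components are taken entrywise, so each component gives a scalar test function. We assume $\bm v$ stays in a compact subset of $\Omega_u$ so that such a $\chi$ exists.

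Since $g_k$ is a finite sum of derivatives of $\delta(\bm u-\bm v)$ of order at most $k$, every integral in the weak form only involves derivatives of order at most $k+1$ in $\bm u$ of the multiplier, evaluated at $\bm u=\bm v$. By Lemma \ref{prop:lemma_delta_fluid} and its multi-derivative version \eqref{eq:fluid_multiple_derivatives}, the factor $\chi$ can be dropped in every term: all $\bm u$-derivatives of $\chi$ vanish on the tube, so $\chi$ acts as $1$. The terms involving $\nabla_u\cdot(\bm F g_k)$ contain one extra $\bm u$-derivative, up to order $k+1$. After integrating by parts in $\bm u$, this still only probes $\chi$ and its derivatives at $\bm v$, where they are $1$ and $0$ respectively, so the lemma applies there as well.

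Next, the time-derivative term is handled by integrating by parts in $t$ and $\bm x$ against $\psi$. For the first term this gives
\begin{align*}
\int \partial_t g_k\,\phi\,dV=-\int g_k\,\partial_t\big[\psi(\bm u-\bm v)^{\otimes j}\big]dV
=\int \psi\Big(\partial_t \bm C^j-\int g_k\,\partial_t(\bm u-\bm v)^{\otimes j}du\Big)dx\,dt .
\end{align*}
This uses that the $j$-th centered moment of $g_k$ equals $\bm C^j$, which is the reproducing property of \eqref{eq:burby_ansatz}, checked by integration by parts against the $\delta$-derivatives. The remaining three terms become $\int\psi\big(\int \nabla\cdot(\bm G g_k)(\bm u-\bm v)^{\otimes j}du+\dots\big)dx\,dt$, interpreted as distributional $\bm u$-pairings. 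These are exactly the right-hand-side integrals of \eqref{eq:centered_hierarchy} with $g=g_k$.

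Exactness then says that $\int\psi\,(\text{residual of the }j\text{-th equation of }\eqref{eq:centered_hierarchy})\,dx\,dt=0$ for all $\psi$. Because the residual is smooth, or at least a distribution in $(\bm x,t)$, it vanishes, and this holds for each $j\le k$. The main obstacle is the technical step of justifying the product test function, that is, replacing the noncompact polynomial weight by a cutoff without changing any pairing. The spatial derivative $\nabla\cdot$ acting on $\delta(\bm u-\bm v(\bm x,t))$ also moves the support in $\bm u$. I would handle this by choosing $\varepsilon$ uniformly via the tube $T_\varepsilon(\bm v)$ over all $(\bm x,t)$, so that a single $\chi$ works for every $(\bm x,t)$ in the support of $\psi$.
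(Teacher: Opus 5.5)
Your proposal is correct and takes essentially the same route as the paper. The paper also tests against $(\bm u-\bm v)^{\otimes k}\phi_x\phi_u\phi_t$ with a cutoff $\phi_u\equiv 1$ on $T_\varepsilon(\bm v)$, removes the cutoff via Lemma \ref{prop:lemma_delta_fluid}, integrates by parts in $t$ using the moment-reproducing property of the ansatz, and reads off \eqref{eq:centered_hierarchy}. The differences are minor: you use a general $\psi(\bm x,t)$ instead of $\phi_x\phi_t$, and you explicitly handle the $\bm F$ and $R$ terms, including the order-$(k+1)$ $\bm u$-derivative coming from $\nabla_u\cdot(\bm F g_k)$, which the paper omits ``for readability''.
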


\begin{proof}
For readability we only show the proposition for the case $\bm F=0$ and $R=0$. The additional case for non-zero coefficients can be treated in a similar way.
\begin{align} \label{eq:prop_fluid_1}
\int \partial_t g_N \, \phi \, dV + \int \partial_{x_i} \left( G_i g_N \right) \, \phi \, dV = 0
\end{align}
We consider a test function with compact support $\phi = (\bm u -   \bm v)^{\otimes k} \, \phi_x(\bm x) \, \phi_u(\bm u) \, \phi_t(t)$ where $\phi_x(\bm x) \in C^{\infty}_0(\Omega_x)$,$\phi_u(\bm u) \in C^{\infty}_0(\Omega_u)$, $\phi_t(t) \in C^{\infty}_0(\Omega_t)$. We further take $\phi_u$ that is $1$ in $T_{\varepsilon}(  \bm v)$ as in Lemma \ref{prop:lemma_delta_fluid} so that the gradients of $\phi_u$ are vanishing at $  \bm v$. Then equation \eqref{eq:prop_fluid_1} is written as
\begin{align} \label{eq:expanded_equation}
\int \partial_t & \left[ g_N \, (\bm u -   \bm v) ^ {\otimes k} \right]\, \phi_u \, \phi_x \, \phi_t \, dV \\ \notag 
& = \int g_N \, \partial_t  \left[ (\bm u -   \bm v) ^{\otimes k} \right]\, \phi_u \, \phi_x \, \phi_t \, dV - \int \partial_{x_i} \left( G_i g_N \right) \, (\bm u -   \bm v) ^{ \otimes k}  \phi_u \, \phi_x \, \phi_t \, dV 
\end{align}
Where we used product rule on $\partial_t$. 
For the term on the left-hand side we perform the integration by parts
\begin{align}
\int \partial_t & \left[ g_N \, (\bm u -   \bm v) ^{\otimes k} \right]\, \phi_u \, \phi_x \, \phi_t \, dV   = - \int g_N \, (\bm u -   \bm v) ^{\otimes k} \, \phi_u \, \phi_x \, \partial_t  \phi_t \, dV
\end{align}
Since $g$ is given in terms of $\delta$ and its derivatives centered at $  \bm v$, we use Lemma \ref{prop:lemma_delta_fluid} to write
\begin{align} \notag
\int \partial_t & \left[ g_N \, (\bm u -   \bm v) ^{\otimes k} \right]\, \phi_u \, \phi_x \, \phi_t \, dV  = - \int g_N \, (\bm u -   \bm v) ^{ \otimes k}  \, \phi_x \, \partial_t  \phi_t \, dV 
= - \int \bm C^k \, \phi_x \, \partial_t \phi_t \, dx \, dt \\ \label{eq:lhs}
& \hspace{9cm} = \int \partial_t \bm C^k \, \phi_x \, \phi_t \, dx \, dt
\end{align}
With the help of Lemma \ref{prop:lemma_delta_fluid}, the first term on the right-hand side of \eqref{eq:expanded_equation} can be written as
\begin{align} \label{eq:rhs_1}
\int g_N \, \partial_t  \left[ (\bm u -   \bm v) ^{\otimes k} \right]\, \phi_u \, \phi_x \, \phi_t \, dV = \int g_N \, \partial_t  \left[ (\bm u -   \bm v) ^ {\otimes k} \right]\, \phi_x \, \phi_t \, dx \, dt \\ \notag
\end{align}
and the second term we rewrite using integration by parts
\begin{align} \notag
-\int \partial_{x_i} & \left( G_i g_N \right) \, (\bm u -   \bm v) ^{\otimes k}  \phi_u \, \phi_x \, \phi_t \, dV \\ \notag
 = &  \int G_i g_N \, \partial_{x_i} \left[ (\bm u -   \bm v) ^ {\otimes k} \right]  \phi_u \, \phi_x \, \phi_t \, dV 
+  \int  G_i g_N \, (\bm u -   \bm v) ^ {\otimes k}\phi_u \, \partial_{x_i} \phi_x \, \phi_t \, dV 
\end{align}
then applying the Lemma to eliminate $\phi_u$ and performing integration by parts on $\partial_{x_i}$ we get
\begin{align} \label{eq:rhs_2}
-\int \partial_{x_i} & \left( G_i g_N \right) \, (\bm u -   \bm v) ^{\otimes k}  \phi_u \, \phi_x \, \phi_t \, dV  = -\int \partial_{x_i} \left( G_i g_N \right) \, (\bm u -   \bm v) ^ {\otimes k}  \phi_x \, \phi_t \, dx \, dt \\ \notag
\end{align}
with this, we see that the terms in \eqref{eq:lhs}, \eqref{eq:rhs_1}, \eqref{eq:rhs_2} under $dx \, dt$ integrals are the same as in \eqref{eq:centered_hierarchy}.
\end{proof}

Proposition \ref{prop:fluid_necessary} states a necessary condition for the moment system to be exact. To obtain a sufficient conditions, we return to the question on the condition on the center $\bm v$ of the system.
To close the system, an evolution equation for the center of the moments is needed. Since the center of the moments is a redundant variable, there is some freedom in its choice. It turns out that there is an expression for $\bm v$ that keeps the moment system \eqref{eq:centered_hierarchy} exact.

The following result holds for a finite number of moments in arbitrary dimensions.

\begin{theorem} \label{prop:main_theorem}
Let the moment hierarchy of degree $N$ be constructed with ansatz \eqref{eq:burby_ansatz} of degree $N$. Let the center satisfy 
\begin{align} \label{eq:velocity}
\partial_t \bm v =& - \bm G(  \bm v) \cdot (\nabla \bm v) +  \bm F(  \bm v)
\end{align}
Then the reduced model coupled to $\bm {\mathcal A}[\bm G, \bm F, R, g]=0$, with $g$ given by ansatz \eqref{eq:burby_ansatz} of degree $N$, is an exact solution of the conservation law \eqref{eq:general_advection_non_conserv}.
\end{theorem}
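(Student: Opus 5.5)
We need to show that for every test function $\phi\in C^\infty_0(\Omega)$
\[
\langle \partial_t g_N + \nabla\cdot(\bm G g_N)+\nabla_u\cdot(\bm F g_N)+R g_N,\phi\rangle=0 ,
\]
given that the moments $C^0,\dots,\bm C^N$ satisfy the hierarchy \eqref{eq:centered_hierarchy} with $g=g_N$ and that $\bm v$ evolves by \eqref{eq:velocity}. The approach is to reduce an arbitrary test function to a finite Taylor polynomial in $\bm u-\bm v$ and then match the resulting polynomial test functions against the moment equations. This is the converse of Proposition \ref{prop:fluid_necessary}.

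\textbf{Step 1 (reduction to polynomial test functions).} Because $g_N$ involves only $\delta$ and its derivatives up to order $N$ at $\bm u=\bm v(\bm x,t)$, pairing with $g_N$ sees only the $N$-jet of $\phi$ at $\bm u=\bm v$. The three flux terms, after integration by parts, contain at most one additional $\bm u$-derivative of $\phi$ (through $\nabla_u\phi$) and the derivatives $\partial_t\phi,\nabla\phi$. Write
\[
\phi(\bm u,\bm x,t)=\sum_{j=0}^{N+1}\tfrac{1}{j!}\,\bm a_j(\bm x,t)\cdot(\bm u-\bm v)^{\otimes j}+\mathcal O(|\bm u-\bm v|^{N+2}),\qquad \bm a_j=\nabla_u^j\phi|_{\bm u=\bm v}.
\]
Using Lemma \ref{prop:lemma_delta_fluid} and \eqref{eq:fluid_multiple_derivatives}, we may multiply by a cutoff $\phi_u$ equal to $1$ near $T_\varepsilon(\bm v)$ and replace $\phi$ by this polynomial. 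The remainder is killed because every pairing takes at most $N+1$ derivatives in $\bm u$ at $\bm u=\bm v$, including after the chain rule produces $\partial_t\bm v$ and $\nabla\bm v$ factors. One subtlety is that $\partial_t$ and $\nabla$ act on $(\bm u-\bm v)^{\otimes j}$ as well as on $\bm a_j$, so the reduction has to be carried out consistently. By linearity it then suffices to test against $\phi=\bm a(\bm x,t)\cdot(\bm u-\bm v)^{\otimes j}\phi_u$ for $j=0,\dots,N+1$ and arbitrary $\bm a\in C^\infty_0$.

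\textbf{Step 2 (degrees $j\le N$).} For $j\le N$, reversing the computation in the proof of Proposition \ref{prop:fluid_necessary} shows that the weak residual equals
\[
\int \bm a\cdot\bigl(\partial_t\bm C^j - \text{RHS of \eqref{eq:centered_hierarchy}}\bigr)\,dx\,dt ,
\]
which vanishes by hypothesis. Here the $\nabla_u\cdot(\bm F g)$ and $Rg$ terms are handled just like the $\bm G$ term. Each identity in that proof was an equality, so the argument runs in both directions.

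\textbf{Step 3 (degree $N+1$, the main obstacle).} This step is where \eqref{eq:velocity} must enter. With $\phi=\bm a\cdot(\bm u-\bm v)^{\otimes(N+1)}$, every term in which no derivative falls on $(\bm u-\bm v)^{\otimes(N+1)}$ pairs a degree-$(N+1)$ polynomial with derivatives of $\delta$ of order at most $N$, so these terms vanish at $\bm u=\bm v$. The surviving terms are those in which exactly one derivative hits the polynomial. They give
\[
(N+1)\,\bigl\langle g_N,\ \bm a\cdot\bigl[(-\partial_t\bm v-\bm G(\bm u)\cdot\nabla\bm v+\bm F(\bm u))\otimes(\bm u-\bm v)^{\otimes N}\bigr]\bigr\rangle ,
\]
symmetrized. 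Only the top-order part of $g_N$, $\tfrac{(-1)^N}{N!}\bm C^N\partial_u^N\delta$, can pair nontrivially with $(\bm u-\bm v)^{\otimes N}$. It forces all $N$ derivatives onto $(\bm u-\bm v)^{\otimes N}$, so the prefactor is evaluated at $\bm u=\bm v$. The residual is therefore proportional to $\bm C^N$ contracted with $\partial_t\bm v+\bm G(\bm v)\cdot\nabla\bm v-\bm F(\bm v)$, which is zero by \eqref{eq:velocity}.

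The delicate bookkeeping is the combinatorial check that lower-order parts of $g_N$ contribute nothing at degree $N+1$ and that the Taylor remainder is genuinely invisible. The second point needs care because $\nabla_u\cdot(\bm F g_N)$ raises the derivative order by one. I would verify both using the explicit formula $\langle\partial_u^\alpha\delta(\bm u-\bm v),\psi\rangle=(-1)^{|\alpha|}\partial_u^\alpha\psi(\bm v)$. Finally, the coefficients are taken from $\bm{\mathcal A}[\bm G,\bm F,R,g_N]=0$ throughout, so the identity holds for the coupled system, as Definition \ref{def:exactness} requires.
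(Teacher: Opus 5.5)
Your proposal is correct, and it takes a genuinely different route from the paper.

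The paper argues by direct expansion. It uses additivity to split into the $\bm G$, $\bm F$ and $R$ cases. It computes $\partial_t \bm C^k$ in closed form from the hierarchy, using the Leibniz rule and the sifting property. It then expands $\partial_t g_N$ and $\nabla\cdot(\bm G g_N)$ in the basis $\partial_u^{\kappa}\delta(\bm u-\bm v)$. Multi-index reindexing identities show that all coefficients with $|\kappa|\le N$ cancel, and the leftover coefficients with $|\kappa|=N+1$ cancel only after \eqref{eq:velocity} is substituted.

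You instead dualize. The residual is a distribution of transverse order at most $N+1$ carried by the graph $\bm u=\bm v(\bm x,t)$, so it is determined by its pairings with $\bm a(\bm x,t)\cdot(\bm u-\bm v)^{\otimes j}$ for $j\le N+1$. For $j\le N$ these pairings are exactly the weak form of the $j$-th moment equation (the converse of Proposition \ref{prop:fluid_necessary}). For $j=N+1$ they collapse to $\bm a$ contracted with the symmetrized product of $\bm C^N$ and $\partial_t\bm v+\bm G(\bm v)\cdot(\nabla\bm v)-\bm F(\bm v)$.

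What your route buys:
\begin{itemize}
\item The cancellation the paper checks term by term becomes automatic.
\item $\bm G$, $\bm F$ and $R$ are handled at once, with no additivity argument and no explicit formula for $\partial_t\bm C^k$.
\item It shows why \eqref{eq:velocity} is the right choice. Given the hierarchy, exactness is equivalent to the vanishing of that symmetrized product. Symmetric tensors correspond to polynomials, which have no zero divisors, so the center equation is in fact forced wherever $\bm C^N\neq 0$.
\end{itemize}
In exchange, the paper's computation produces the explicit moment equations, which it needs anyway to write down the degree zero, one and two models.

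The one step you should spell out is the remainder in Step 1. If all $\bm u$-derivatives of order at most $N+1$ of the remainder vanish identically on the graph, differentiate these identities in $t$ and $\bm x$. This shows that the $N$-jets of its $\partial_t$ and $\nabla$ derivatives also vanish on the graph, which is what makes it invisible to $g_N$.
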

The first term in equation \eqref{eq:velocity} indicates that the center is advected by the coefficient $\bm G$, while the second term indicates that the coefficient $\bm F$ acts as a forcing term.

\begin{remark} \label{remark:main_remark}
Note that $\bm {\mathcal A}$ will not appear explicitly in the proof. The proof applies as long as $\bm {\mathcal A}$ allows to determine the coefficients from the distribution $g_k$.   
\end{remark}

The proof is given in Section \ref{sec:proof_main_theorem}. We close this section by providing the explicit formulas for the models of degree zero, one and two.

The degree zero model reads

\begin{align} \label{eq:deg_zero_v}
\partial_t \bm v &= - \bm G(  \bm v) \cdot (\nabla \bm v) +  \bm F(  \bm v) \\ \label{eq:deg_zero_n}
\partial_t n &= - \nabla \cdot \left( \bm G(  \bm v) \, n \right)  - R(  \bm v) \, n 
\end{align}

The degree one model reads

\begin{align}
\partial_t \bm v =& - \bm G(  \bm v) \cdot (\nabla \bm v) +  \bm F(  \bm v) \\
\partial_t n =& - \nabla \cdot \left( \bm G(  \bm v) \, n + \bm P \cdot \nabla_v \bm G(  \bm v) \right)  - R(  \bm v) \, n - \bm P \cdot \nabla_v R(  \bm v) \\
\partial_t \bm P =& - (\bm P \otimes \bm G(  \bm v)) \cdot \nabla - (\bm P \cdot  \nabla_v \bm G(  \bm v))\cdot (\nabla \bm v) +\bm P \cdot \nabla_v \bm F(  \bm v) - \bm P \, R(  \bm v) 
\end{align}

The degree two model reads

\begin{align}
\partial_t \bm v =& - \bm G(  \bm v) \cdot (\nabla \bm v) + \bm F(  \bm v) \\ 
\partial_t n =& - \nabla \cdot \left( \bm G(  \bm v) \, n + \bm P \cdot \nabla_v \bm G(  \bm v) + \frac{1}{2} \bm S : \nabla_v \nabla_v \bm G(  \bm v) \right) \\ \notag
 & - R(  \bm v) \, n - \bm P \cdot \nabla_v R(  \bm v) - \frac{1}{2} \bm S : \nabla_v \nabla_v R(  \bm v) \\ 
\partial_t \bm P =& - (\bm P \otimes \bm G(  \bm v)) \cdot \nabla -   (\bm P \cdot \nabla_v \bm G(  \bm v))\cdot (\nabla \bm v) \\ \notag & - (\bm S \cdot \nabla_v \bm G(  \bm v)) \cdot \nabla -   \frac{1}{2} (\bm S: \nabla_v \nabla_v \bm G(  \bm v)) \cdot (\nabla \bm v) \\ \notag
& +\bm P \cdot \nabla_v \bm F(  \bm v) +  \frac{1}{2} \bm S : \nabla_v \nabla_v \bm F(  \bm v)  - \bm P \, R(  \bm v) - \bm S \cdot \nabla_v R(  \bm v) \\ 
\partial_t \bm S =& - \left( \bm S \otimes \bm G(  \bm v) \right) \cdot \nabla -   \bm S \cdot (\nabla_v \bm G(  \bm v)) \cdot (\nabla \bm v) -   (\bm v \nabla)\cdot (\bm G(  \bm v) \nabla_v) \cdot \bm S \\ \notag
 &+ (\bm F(  \bm v) \nabla_v) \cdot \bm S + \bm S \cdot (\nabla_v \bm F(  \bm v))   -\bm S R(  \bm v)
\end{align}

\subsection{Proof of Theorem \ref{prop:main_theorem} in 1u1x} \label{sec:proof_main_theorem_1u1x}

 Before proving the theorem in its general form, we provide a proof to the system in one spatial dimension. The proof in one dimension outlines the general idea of the proof and provides a more detailed and clear view on the steps. 

Let us first collect some useful properties of the $\delta$-distribution, see \cite[Chapter III, Section 1]{gelfand}. We assume $\bm v(\bm x, t) \in C^{\infty}(\Omega_{x})^d$. We have the sifting property:
\begin{align} \label{eq:sifting}
\int_{\Omega} \delta(\bm u -  \bm v) \phi \, dV = \int_{\Omega_x} \int_{\Omega_t} \phi( \bm v) \, dt \, dx  \hspace{1cm} \forall \phi \in C^{\infty}_0(\Omega)
\end{align}
We note that there exist an invertible transformation with non-vanishing Jacobian given by $( \bm \theta(\bm u, \bm x, t)$, $\bm \chi(\bm u, \bm x, t)$, $\tau(\bm u, \bm x, t) )= \left( \bm u - \bm v(\bm x, t), \bm x, t \right)$. The existence of this transformation is sufficient for the chain rule to hold \cite[Chapter III, Section 1]{gelfand}:
\begin{align} \label{eq:chain_x} 
\nabla \delta(\bm u -  \bm v) &= - \, \nabla \bm v \cdot \nabla_u \delta(\bm u - \bm v) \\ \label{eq:chain_t}
\partial_t \delta(\bm u -  \bm v) &= - \, \partial_t \bm v \cdot \nabla_u \delta(\bm u -  \bm v)
\end{align} 

Additionally, for a smooth function $G(\bm u, \bm x, t)$ the following product rule formulas hold.
\begin{align} 
\partial_{x_i} \left(G \, \delta(\bm u-  \bm v) \right) & = \partial_{x_i} G \, \delta(\bm u - \bm v)  + G \, \partial_{x_i} (\delta(\bm u -  \bm v)) \\ 
\partial_{u_i} \left(G \, \delta(\bm u-  \bm v) \right) & = \partial_{u_i} G \, \delta(\bm u -  \bm v) + G \, \partial_{u_i}\delta(\bm u -  \bm v)
\end{align}

In one-dimension, we denote derivatives w.r.t $x$ by $\partial_x$, w.r.t. $u$ by $\partial_u$ and w.r.t. $ v$ by $\partial_{v}$
\begin{proof}[Proof (1u1x)]
We want to show that 
\begin{align} \label{eq:general_advec_1d}
\partial_t g_N + \partial_x \left( G(u) g_N \right) + \partial_u \left( F(u) g_N \right) + R(u) g_N = 0
\end{align}
First we observe that from the definition of the ansatz \eqref{eq:burby_ansatz}, $\partial_t g_N$ is additive in $\partial_t C^k$ and, by the chain rule \eqref{eq:chain_t}, is additive in $\partial_t v$. From \eqref{eq:centered_hierarchy} and \eqref{eq:velocity} we see that $\partial_t C^k$ and $\partial_t v$ are additive in $G, F,$ and $R$. It follows that $\partial_t g_N$, and consequently \eqref{eq:general_advec_1d}, is additive in $G, F$ and $R$.
Due to the additive property, to prove the theorem it suffices to prove the following three distinct cases: (1) $F(u, x, t)=0, R(u, x, t)=0$; (2)  $G(u, x, t)= 0, R(u, x, t)=0$; (3) $G(u, x, t) = 0, F(u, x, t)=0$.
The first case is most complex, so we present it in detail proceeding in three steps. We consider
\begin{align} \label{eq:G_1d}
\partial_t g_N + \partial_x \left( G(u) g_N \right) = 0
\end{align}
with $g_N = \sum_{i=0}^N \frac{(-1)^i}{i!} \, C^i \, \partial_u^{i} \delta(u- v)$ and $\delta = \delta(u- v)$. This leads to
\begin{align} \label{eq:main_eq_G}
\sum_{k=0}^N \frac{(-1)^{k}}{k!} \, \partial_t C^k \,\, \partial_u^{k} \delta + \sum_{k=0}^N \frac{(-1)^{k}}{k!} \, C^k \,\, \partial_u^{k+1} \delta \,\, (-  \, \partial_t v) + \partial_x (G(u)g_N) 
\end{align}

\textit{Step 1: Find expression for $\partial_t C^k$} 

\noindent Multiplying \eqref{eq:G_1d} with $(u- v)^k$ and integrating over $\Omega_u$, we obtain
\begin{align}
\int_{\Omega_u} \partial_t \left( g_N \, (u- v)^k \right) \, du - \int_{\Omega_u} k \, g_N \, (u- v)^{k-1} ( -  \, \partial_t v) \, du \\ \notag + \partial_x \int_{\Omega_u} g_N \, G(u) \, (u- v)^k du - \int_{\Omega_u} g_N \, G(u) \, k \, (u- v)^{k-1} \, ( -  \partial_x v) \, du = 0
\end{align}
where we used product rules with respect to $\partial_t$ and $\partial_x$. We note that the second and the last terms vanish for $k=0$.

Next, we substitute the equation for the center, $\partial_t v = -G( v) \, \partial_x v$, and the ansatz for $g_N$
\begin{align} \label{eq:Ck_1d}
&\partial_t C^k - \int_{\Omega_u} \sum_{i=0}^{N} \frac{(-1)^{i}}{i!} C^i \, \partial_u^i \delta \, k \,   \, (u- v)^{k-1} \, \left( G( v) \partial_x v \right) \, du \\ \notag
&+ \partial_x \int_{\Omega_u} \sum_{i=0}^{N} \frac{(-1)^{i}}{i!} \, C^i \, \partial_u^i \delta \, G(u) \, (u- v)^k \, du + \int_{\Omega_u} \sum_{i=0}^N \frac{(-1)^i}{i!} \, C^i \partial_u^i \delta \, G(u) \, k \,   \, (u- v)^{k-1} \, \partial_x v \, du 
\end{align}
Next, all terms with $\partial_u^i \delta$ are integrated by parts with respect to $u$ variable to obtain expressions containing only $\delta$. We illustrate this explicitly for the first term on the second line
\begin{align}
\int_{\Omega_u} \sum_{i=0}^{N} \frac{(-1)^{i}}{i!} \, C^i \, \partial_u^i \delta \, G(u) \, (u- v)^k \, du = d \int_{\Omega_u} \sum_{i=0}^{N} \frac{(-1)^{2i}}{i!} \, C^i \,\delta \,  \partial_u^i \, \left( G(u) \, (u- v)^k \right) \, du \\ \notag
=  \int_{\Omega_u} \sum_{i=0}^{N} \frac{1}{i!} \, C^i \,\delta \, \sum_{z=0}^{i}  \binom{i}{z} \, \partial_u^{i-z} G(u) \, \partial_u^{z} (u- v)^k  \, du \\ \notag
= \int_{\Omega_u} \sum_{i=0}^{N} \frac{1}{i!} \, C^i \,\delta \, \sum_{z=0}^{i}  \binom{i}{z} \, \partial_u^{i-z} G(u) \, \frac{k!}{(k-z)!} (u- v)^{k-z}  \, du \\ \notag
=\sum_{i=0}^{N} \sum_{z=0}^{i}  \frac{1}{i!} \, C^i  \,  \binom{i}{z} \partial_v^{i-z} G( v) \, \frac{k!}{(k-z)!} (0)^{k-z}  
\end{align}
Here we used the general Leibniz rule to express the derivative of product as product of derivatives. In the last step, we applied the sifting property of $\delta$. This expression is non-vanishing only for $z=k$. Since $z$ runs up to $i$, the term vanishes for $i<k$ while for $i \geq k$, we set $z=k$ and obtain
\begin{align}
=\sum_{i=k}^{N}  \frac{1}{i!} \, C^i  \,  \binom{i}{k} \partial_v^{i-k} G( v) \, \frac{k!}{(k-k)!}  = \partial_x \sum_{i=k}^N \frac{1}{(i-k)!} \, C^i \,  \partial_v^{i-k} G( v)
\end{align}
After simplifying the remaining terms in a similar way, we obtain
\begin{align} \label{eq:dtCk_1d}
&\partial_t C^k - k \,   \, C^{k-1} \, G( v) d v \\ \notag
&+ \partial_x \sum_{i=k}^{N} \frac{1}{(i-k)!} C^i \,\, \partial_v^{i-k} G( v) + k\,   \, \sum_{i=k-1}^{N} \frac{1}{(i-(k-1))!} C^i \,\, \partial_v^{i-(k-1)} G( v) \, d v = 0
\end{align}
As before, the second and the last terms are vanishing for $k=0$.

\textit{Step 2: Simplify the source term $\partial_x \left( G(u) g_N \right)$ }

We simplify in the sense of distributions and use tensor product test functions to ease the derivation. This is justified since the span of tensor product test functions $\phi_u \, \phi_x \, \phi_t$ is dense in the space of test functions \( \phi \in C^{\infty}_0(\Omega)\).

\begin{align}
\int \partial_x \left( G(u) g_N \right) \, \phi_u \, \phi_x \, \phi_t \, dV = -\int G(u) \, g_N \, \phi_u \, \partial_x \, \phi_x \, \phi_t \, dV \\ \notag
= - \int \sum_{i=0}^N \frac{(-1)^i}{i!} C^i \, \partial_u^i \delta \, G(u) \phi_u \, \partial_x \phi_x \, \phi_t \, dV \\ \notag
= - \int \sum_{i=0}^N \frac{(-1)^{2i}}{i!} C^i \, \delta \, \sum_{z=0}^i \binom{i}{z} \partial_u^{i-z} G(u) \, \partial_u^{z} \phi_u \, \partial_x \phi_x \, \phi_t \, dV \\ \notag
= - \int \sum_{i=0}^N \sum_{z=0}^i  \frac{(-1)^z}{i!} C^i \, \partial_u^{z}  \delta \, \binom{i}{z} \partial_v^{i-z} G( v) \, \phi_u \, \partial_x \phi_x \, \phi_t \, dV \\ \notag 
= - \int \sum_{i=0}^N \sum_{z=0}^i \, \frac{(-1)^z}{z!} \frac{1}{(i-z)!} C^i \, \partial_u^{z}  \delta \, \partial_v^{i-z} G( v) \, \phi_u \, \partial_x \phi_x \, \phi_t \, dV \\ \notag
=  \int \sum_{i=0}^N \sum_{z=0}^i \, \frac{(-1)^z}{z!} \frac{1}{(i-z)!} \partial_x \left( C^i \, \partial_v^{i-z} G( v) \, \partial_u^{z}  \delta  \right)  \, \phi_u \, \phi_x \, \phi_t \, dV \\ \notag 
=  \int \sum_{i=0}^N \sum_{z=0}^i \, \frac{(-1)^z}{z!} \frac{1}{(i-z)!} \partial_x \left( C^i \, \partial_v^{i-z} G( v) \right) \, \partial_u^{z}  \delta  \, \phi_u \, \phi_x \, \phi_t \, dV \\ \notag
-  \int \sum_{i=0}^N \sum_{z=0}^i \, \frac{(-1)^z}{z!} \frac{1}{(i-z)!} C^i \, \partial_v^{i-z} G( v) \, \partial_u^{z+1}  \delta \,   \, \partial_x v  \, \phi_u \, \phi_x \, \phi_t \, dV \\ \notag 
\end{align}

\textit{Step 3:} Substituting $\partial_t C^k$ from \textit{Step 1}, $\partial_x \left( G(u) g_N\right)$ from \textit{Step 2}, and also $\partial_t v=-G( v) \partial_x v$ into \eqref{eq:main_eq_G}, we obtain
\begin{align} \label{eq:G_line_1}
\eqref{eq:main_eq_G} =  \sum_{k=1}^N \frac{(-1)^k}{k!} \left[ k \,   \, C^{k-1} \, G( v) \partial_x v \right] \partial_u^k \delta + \sum_{k=0}^N \frac{(-1)^k}{k!} \left[   \, C^k \, G( v) \partial_x v \right] \partial_u^{k+1} \, \delta \\ \label{eq:G_line_2}
- \sum_{k=0}^N \sum_{i=k}^N \frac{(-1)^k}{k!(i-k)!}  \partial_x \left( C^i \, \, \partial_v^{i-k} G( v) \right) \partial_u^k \delta + \sum_{i=0}^N \sum_{k=0}^i \frac{(-1)^k}{k!(i-k)!} \partial_x \left( C^i \, \, \partial_v^{i-k} G( v) \right) \partial_u^k \delta \\ \label{eq:G_line_3}
- \sum_{k=1}^N \sum_{i=k-1}^N \frac{(-1)^k}{k!(i-(k-1))!} \left[ k \,   \, C^i \,\,  \partial_v^{i-(k-1)}G( v) \partial_x v \right] \partial_u^k \delta \\ \label{eq:G_line_4}
- \sum_{i=0}^N \sum_{k=0}^i \frac{(-1)^k}{k!(i-k)!} \left[   \, \, C^i \,\, \partial_v^{i-k} G( v) \partial_x v \right] \partial_u^{k+1}\delta
\end{align}
here in some terms the summation runs from $k=1$, as these terms are vanishing for $k=0$.

The terms in \eqref{eq:G_line_2} vanish due to the identity $\sum_{k=0}^N \sum_{i=k}^N a_{ik} = \sum_{i=0}^N \sum_{k=0}^i a_{ik}$. To simplify \eqref{eq:G_line_1}, we shift the summation index in the second term by one; with this shift the sum simplifies to $\frac{(-1)^N}{N!}   \, C^N \, G( v) \, \, \partial_x v \, \, \partial_u^{N+1} \delta$. The sum of \eqref{eq:G_line_3} and \eqref{eq:G_line_4} gives $-\frac{(-1)^N}{N!}   C^N \, G( v) \, \partial_x v \, \partial_u^{N+1} \delta$, which cancels with the remaining term from \eqref{eq:G_line_1}. This completes the proof of $\partial_t g_N + \partial_x ( G(u) g_N)=0$.

The second case, when $G$ and $R$ are zero, is handled similarly. We skip the derivation and write down the final expression only:
\begin{align} \label{eq:F_line_1}
\partial_t g_N + \partial_u (F(u)g_N) =-\sum_{k=1}^N \frac{(-1)^k}{k!} k\, C^{k-1} \, F( v) \, \partial_u^k \delta - \sum_{k=0}^N \frac{(-1)^k}{k!} \, C^{k} \, F( v) \, \partial_u^{k+1} \delta \\ \label{eq:F_line_2}
+\sum_{k=1}^N \sum_{i=k-1}^N \frac{(-1)^k}{k!} \frac{k}{(i-(k-1))!} C^i \, \,  \partial_v^{i-(k-1)} F( v) \partial_u^k \delta \\ \label{eq:F_line_3}
- \sum_{i=0}^N \sum_{z=0}^i \frac{(-1)^{z+1}}{z!} \frac{1}{(i-z)!} C^i \partial_v^{i-z} F( v) \partial_u^{z+1} \delta
\end{align}  
To show that this sum cancels, we note that \eqref{eq:F_line_1} simplifies to $-\frac{(-1)^N}{N!} C^N F( v) \partial_u^{N+1} \delta$ while the sum of \eqref{eq:F_line_2} and \eqref{eq:F_line_3} gives $-\frac{(-1)^{N+1}}{N!} C^N F( v) \partial_u^{N+1} \delta$. So that the whole expression vanishes.

The third case, when $G$ and $F$ are zero, yields an expression:
\begin{align}
\partial_t g_N+ R(u)g_N &= \\ \notag
&-\sum_{k=0}^N \sum_{i=k}^N \frac{(-1)^k}{k!}\frac{1}{(i-k)!} C^i \partial_v^{i-k} R( v) \partial_u^k \delta + \sum_{i=0}^N \sum_{k=0}^i \frac{(-1)^k}{k!} \frac{1}{(i-k)!} C^i  \partial_v^{i-k} R( v) \partial_u^k \delta
\end{align}
that also vanishes. This completes the proof that $\partial_t g_N + \partial_x (G(u)g_N) + \partial_u (F(u)g_N) + Rg_N = 0$.
\end{proof}

\subsection{Proof of Theorem \ref{prop:main_theorem}} \label{sec:proof_main_theorem}

The proof of Theorem \ref{prop:main_theorem} in an arbitrary number of dimensions resembles closely the one-dimensional case. The main difference is that we have to use multi-index notation $\alpha = (\alpha_1,..., \alpha_d)$ with $|\alpha|=\sum_{i}^d \alpha_i $ to express tensors and derivatives. First, let us rewrite the tensor contraction. Recall that a component of a rank-$a$ tensor is denoted by $C^a_{k_1...k_a}$. The indices $(k_1...k_a)$ can be represented by multi-index $\alpha=(\alpha_1...\alpha_d)$, where $\alpha_i$ denotes the number of repeated indices in the direction $i$. For example, a fourth-order tensor component $(1,1,2,1)$ in three dimensions can be represented by the multi-index $\alpha=(3,1,0)$. However, this representation is unique only up to the permutation of indices, for instance, $(1,1,2,1)$ and $(1,2,1,1)$, along with any other permutation, correspond to the same multi-index $(3,1,0)$. This is not a problem, since the definition of moment tensors ensures their symmetry under the index permutations, with the total number of distinct permutations given by $\frac{a!}{\alpha!}$, where $\alpha!=\prod_i^d \alpha_i!$. Consequently, the contraction of moment tensors with another tensor that is also symmetric under permutations, such as gradients of $\delta$, can still be expressed using multi-indices as follows $\sum_{k_1=1}^d...\sum_{k_a=1}^d C^a_{k_1...k_a} \partial_u^{k_1} ... \partial_u^{k_a} \delta = \sum_{|\alpha|=a} \frac{a!}{\alpha!} C^a_{\alpha} \, \partial_u^{\alpha} \delta$ where the sum is over all $\alpha$ such that $|\alpha|=a$.
In this notation the ansatz \eqref{eq:burby_ansatz} reads
\begin{align} \label{eq:ansatz_mi}
g_N = \sum_{a=0}^N \sum_{|\alpha|=a} \frac{(-1)^a}{\alpha!} C^a_{\alpha} \partial_u^{\alpha} \delta
\end{align}

\begin{proof}
As in the one-dimensional case, it suffices to prove the theorem for three distinct cases. We consider the case $\partial_t g_N + \nabla \cdot(\bm G(\bm u) g_N)=0$ which is the most intricate of the three cases. Substituting the ansatz \eqref{eq:ansatz_mi} the case reads
\begin{align} \label{eq:main_eq_G_dim}
 \sum_{k=0}^N \sum_{|\kappa|=k} \frac{(-1)^k}{\kappa!} \partial_t C^k_{\kappa} \partial_u^{\kappa} \delta +  \sum_{k=0}^N \sum_{|\kappa|=k} \sum_{|\theta|=1}  (-1)  \frac{(-1)^k}{\kappa!} C^k_{\kappa} \partial_u^{\kappa + \theta} \delta \, \partial_t v_{\theta} + \nabla \cdot(\bm G(\bm u) g_N) = 0
\end{align}

\textit{Step1: Find expression for $\partial_t C^k$}

From \eqref{eq:centered_hierarchy} we obtain
\begin{align} \notag
\partial_t C^k_{\kappa} =& \int g_N \partial_t \left[ (u- v)^{\kappa} \right] \, du - \int \sum_{|\omega|=1} \partial_x^{\omega}\left[ G_{\omega}(u) \, g_N \right] (u- v)^{\kappa} \, du \\ \notag
=&\int g_N \sum_{|\theta|=1}  (-1) \, \partial_u^{\theta} \left[ (u- v)^{\kappa} \right] \partial_t v_{\theta}  \, du - \int \sum_{|\omega|=1} \partial_x^{\omega}\left[ G_{\omega}(u) \, g_N  (u- v)^{\kappa} \right] \, du \\ \notag
 &-\int \sum_{|\omega|=1} (-1)^{|\omega|} \, G_{\omega}(u) \, g_N \, \partial_x^{\omega} \left[ (u- v)^{\kappa} \right] \, du \\ \label{eq:first_term}
 = & \int g_N \, \sum_{|\theta|=1} \sum_{|\omega|=1}   \frac{\kappa!}{(\kappa - \theta)!} \left( u- v\right)^{\kappa-\theta} G_{\omega}( v) \partial_x^{\omega} v_{\theta} \, du \\ \label{eq:third_term}
 &- \int \sum_{|\omega|=1} \partial_x^{\omega}\left[ G_{\omega}(u) \, g_N  (u- v)^{\kappa} \right] \, du 
 \\ \label{eq:second_term}
 &+ \int \sum_{|\omega|=1} \sum_{|\theta=1|} \,  (- ) \, G_{\omega}(u) g_N \frac{\kappa!}{(\kappa-\theta)!} \left( u- v \right)^{\kappa-\theta} \partial_x^{\omega} v_{\theta} \, du  
\end{align}
We observe that \eqref{eq:first_term} and \eqref{eq:second_term} vanish when $|\kappa|=0$ or if $\theta_i > \kappa_i$ for some $i$. Substituting the ansatz \eqref{eq:ansatz_mi} and performing integration by parts over $u$-variable, we obtain expressions that contain only $\delta$. To illustrate this we consider the case \eqref{eq:third_term}.
\begin{align} 
 \eqref{eq:third_term} &=- \sum_{|\omega|=1} \partial_x^{\omega} \int \sum_{a=0}^N \sum_{|\alpha|=a} \frac{(-1)^a}{\alpha!} C^a_{\alpha} \, \partial_u^{\alpha} \delta \, G_{\omega}(u) \, (u- v)^{\kappa} \, du \\ \notag
 &=- \sum_{|\omega|=1} \partial_x^{\omega} \int \sum_{a=0}^N \sum_{|\alpha|=a} \frac{1}{\alpha!} C^a_{\alpha} \,  \delta \, \sum_{\beta \leq \alpha} \binom{\alpha}{\beta} \partial_u^{\alpha-\beta} G_{\omega}(u) \, \partial_u^{\beta} \left[ (u- v)^{\kappa} \right] \, du \\ \notag
  &=- \sum_{|\omega|=1} \partial_x^{\omega} \int \sum_{a=0}^N \sum_{|\alpha|=a}  \sum_{\beta \leq \alpha} \frac{1}{\alpha!} \binom{\alpha}{\beta} C^a_{\alpha}  \, \delta \, \partial_u^{\alpha-\beta} G_{\omega}(u) \, \frac{\kappa!}{(\kappa-\beta)!}  (u- v)^{\kappa-\beta} \, du \\ \label{eq:45_last}
   &=- \sum_{|\omega|=1} \partial_x^{\omega} \left[ \sum_{a=0}^N \sum_{|\alpha|=a}  \sum_{\beta \leq \alpha} \frac{1}{\alpha!} \binom{\alpha}{\beta} C^a_{\alpha}   \, \partial_v^{\alpha-\beta} G_{\omega}( v) \, \frac{\kappa!}{(\kappa-\beta)!}  (0)^{\kappa-\beta} \right]
\end{align}
Here we used the general Leibniz rule in multi-dimensions to express the derivative of product as product of derivatives: $\partial^{\alpha} \left( a^{\mu} b^{\nu} \right) = \sum_{\beta \leq \alpha} \binom{\alpha}{\beta} \partial^{\alpha-\beta} \, a^{\mu} \, \partial^{\beta} \, b^{\nu}$.

The term in \eqref{eq:45_last} is non-zero only when $\beta=\kappa$. Since $\beta \leq \alpha$, it follows that $\alpha \geq \kappa$ and also $a \geq k$, where recall $a=|\alpha|$ and $k=|\kappa|$. We therefore get
\begin{align}
 \eqref{eq:third_term}  &=- \sum_{|\omega|=1}\sum_{a=k}^N \sum_{\substack{|\alpha|=a \\ \alpha \geq \kappa}}  \frac{1}{(\alpha-\kappa)!}  \partial_x^{\omega} \left[  C^a_{\alpha}   \, \partial_u^{\alpha-\kappa} G_{\omega}(u) \, \right]
\end{align}
simplifying the remaining terms in the same way, we obtain
\begin{align} \notag
\partial_t C^k = \sum_{|\theta|=1} \sum_{|\omega|=1}   \frac{\kappa!}{(\kappa-\theta)!} \, C^{k-1}_{\kappa-\theta} \, G_{\omega}( v) \, \partial_x^{\omega}v_{\theta} - \sum_{|\omega|=1}\sum_{a=k}^N \sum_{\substack{|\alpha|=a \\ \alpha \geq \kappa}}  \frac{1}{(\alpha-\kappa)!}  \partial_x^{\omega} \left[  C^a_{\alpha}   \, \partial_v^{\alpha-\kappa} G_{\omega}(u) \, \right] \\ \label{eq:dtCk}
\sum_{|\theta|=1} \sum_{|\omega|=1} \sum_{a=k-1}^N \sum_{\substack{|\alpha|=a \\ \alpha \geq \kappa - \theta}} (-1) \frac{\kappa!}{(\alpha - (\kappa-\theta))!(\kappa-\theta)!} C^{a}_{\alpha} \, \partial_v^{\alpha-(\kappa-\theta)}G_{\omega}( v) \, \partial_x^w v_{\theta}
\end{align}
As before, the first and the last terms vanish when $|\kappa|=k=0$ or when there exists an $i$ such that $\theta_i > \kappa_i$. We also observe the similarity of the expression to the one-dimensional case in \eqref{eq:dtCk_1d}.

\textit{Step 2: Simplifying the source term $\nabla \cdot (\bm G(\bm u) g_N)$}

The simplification follows the steps outlined in the one-dimensional case. We use the multi-index notation, and also the integration by parts and the Leibniz product in a very much the same way as when deriving (\ref{eq:first_term}- \ref{eq:second_term}) in Step 1. We state the result directly:
\begin{align}
\int &\sum_{|\omega|=1} \partial_x^{\omega} \left( G_{\omega} g_N \right) \phi \, dV \\ \notag
&= \int \sum_{|\omega|=1} \sum_{a=0}^N \sum_{|\alpha|=a} \sum_{\kappa \leq \alpha} \frac{(-1)^{\kappa}}{(\alpha-\kappa)! \, \kappa!} \, \partial_x^{\omega} \left[ C^{a}_{\alpha} \partial_u^{\alpha-\kappa} \, G_{\omega}( v) \right] \partial_u^{\kappa} \delta \, \phi \, dV
\\ \notag
&+\int \sum_{|\omega|=1} \sum_{a=0}^N \sum_{|\alpha|=a} \sum_{\kappa \leq \alpha} \sum_{|\theta|=1} (-1) \frac{(-1)^{\kappa}}{(\alpha-\kappa)!(\kappa)!} C^a_{\alpha} \, \partial_v^{\alpha-\kappa} G_{\omega}( v) \, \partial_x^{\omega} v_{\theta} \, \partial_u^{\kappa+\theta} \delta \, \phi \, dV \\ \notag
\end{align}

\textit{Step 3:} Substituting $\partial_t C^k_{\kappa}$ from \textit{Step 1}, $\nabla \cdot (\bm G(\bm u) g_N)$ from \textit{Step 2}, and $\partial_t v_\theta = - \sum_{|\omega|=1} G_{\omega}( v) \partial_x^{\omega} v_{\theta}$ into \eqref{eq:main_eq_G_dim} we obtain
\begin{align} \label{eq:tr_one}
\partial_t & g_N + \nabla \cdot \left( \bm G(\bm u) g_N \right) =  
\sum_{k=1}^N \sum_{|\theta|=1}  \sum_{\substack{|\kappa|=k \\ \kappa_i \geq \theta_i}}  \sum_{|\omega|=1}   \frac{(-1)^k}{(\kappa-\theta)!} \, C^{k-1}_{\kappa-\theta} \, G_{\omega}( v) \, \partial_x^{\omega}v_{\theta}   \partial_u^{\kappa} \delta \\\label{eq:tr_two}
&-\sum_{k=0}^N \sum_{|\kappa|=k}  \sum_{|\omega|=1}\sum_{a=k}^N \sum_{\substack{|\alpha|=a \\ \alpha \geq \kappa}}   \frac{(-1)^k}{\kappa!} \frac{1}{(\alpha-\kappa)!}  \partial_x^{\omega} \left[  C^a_{\alpha}   \, \partial_v^{\alpha-\kappa} G_{\omega}(u) \, \right]   \partial_u^{\kappa} \delta \\ \label{eq:tr_three}
& +\sum_{|\omega|=1} \sum_{|\theta|=1}  \sum_{k=1}^N \sum_{\substack{|\kappa|=k \\ \kappa_i \geq \theta_i}}  \sum_{a=k-1}^N \sum_{\substack{|\alpha|=a \\ \alpha \geq \kappa - \theta}} (-1) \frac{(-1)^{k}}{(\alpha - (\kappa-\theta))!(\kappa-\theta)!} C^{a}_{\alpha} \, \partial_v^{\alpha-(\kappa-\theta)}G_{\omega}( v) \, \partial_x^w v_{\theta} \, \, \partial_u^{\kappa} \delta \\ \label{eq:tr_four}
& +\sum_{k=0}^N \sum_{|\kappa|=k} \sum_{|\theta|=1} \sum_{|\omega|=1} \frac{(-1)^k}{\kappa!} \,   \, C^k_{\kappa} \, G_{\omega}( v) \, \partial_x^{\omega} v_{\theta} \, \partial_u^{\kappa+\theta} \delta \\ \label{eq:tr_five}
&+\sum_{|\omega|=1} \sum_{a=0}^N \sum_{|\alpha|=a} \sum_{\kappa \leq \alpha} \frac{(-1)^k}{(\alpha-\kappa)! \, \kappa!} \, \partial_x^{\omega} \left[ C^{a}_{\alpha} \partial_v^{\alpha-\kappa} \, G_{\omega}( v) \right] \partial_u^{\kappa} \delta  \\ \label{eq:tr_six}
&+\sum_{|\omega|=1} \sum_{a=0}^N \sum_{|\alpha|=a} \sum_{\kappa \leq \alpha} \sum_{|\theta|=1} (-m) \frac{(-1)^k}{(\alpha-\kappa)!(\kappa)!} C^a_{\alpha} \, \partial_v^{\alpha-\kappa} G_{\omega}( v) \, \partial_x^{\omega} v_{\theta} \, \partial_u^{\kappa+\theta} \delta 
\end{align}
In view of the note under \eqref{eq:dtCk}, the summation in \eqref{eq:tr_one} and \eqref{eq:tr_three} runs from $k=1$ since for $k=0$ these terms vanish. Since they also vanish whenever $\theta_i > \kappa_i$, we introduced a condition $\kappa_i \geq \theta_i$ under the summation signs.

\begin{itemize}
\item As in the one-dimensional case, we identify terms that cancel out. We first note the equality
\begin{align} \label{eq:sum_identity}
\sum_{k=0}^a \sum_{|\kappa|=k} \sum_{\substack{|\alpha|=a \\ \alpha \geq \kappa}} f(\alpha, \kappa) = \sum_{|\alpha|=a} \sum_{\kappa \leq \alpha} f(\alpha, \kappa).
\end{align}
the equality holds since any pair $(\alpha, \kappa)$ of the sum satisfies the conditions
\begin{align}
|\alpha|=a \text{ and } \kappa \leq \alpha \text{ and } |\kappa|=k \text{ and } 0 \leq k \leq a  & \iff |\alpha|=a \text{ and } \kappa \leq \alpha \text{ and } |\kappa| \leq a=|\alpha| \\
& \iff |\alpha|=a \text{ and } \kappa \leq \alpha 
\end{align}
Applying the equality to the term in \eqref{eq:tr_five}, the term is neutralized with \eqref{eq:tr_two}.

\item Next, we simplify the sum of \eqref{eq:tr_three} and \eqref{eq:tr_six}. First, we use the identity \eqref{eq:sum_identity} and identity $\sum_{a=0}^N \sum_{k=0}^a c_{ak}= \sum_{k=0}^N \sum_{a=k}^N c_{ak}$ to rewrite \eqref{eq:tr_six} as follows
\begin{align}
&\sum_{|\omega|=1} \sum_{k=0}^N \sum_{a=k}^N \sum_{|\kappa|=k} \sum_{\substack{|\alpha|=a \\ \alpha \geq \kappa}} \sum_{|\theta|=1} (-1) \frac{(-1)^k}{(\alpha-\kappa)!(\kappa)!} C^a_{\alpha} \, \partial_v^{\alpha-\kappa} G_{\omega}( v) \, \partial_x^{\omega} v_{\theta} \, \partial_u^{\kappa+\theta} \delta 
\end{align} 
then we factor out the summation over $\theta$ and relabel $\kappa$ to $\xi-\theta$. Since by assumption $\kappa_i \geq 0$ for any $i$, we need to be careful and write this assumption explicitly, i.e., we require $\xi_i \geq \theta_i $. We also relabel $k$ to $z-1$ and obtain
\begin{align} \label{eq:simplified_tr_six}
\sum_{|\omega|=1} \sum_{|\theta|=1}  \sum_{z=1}^{N+1} \sum_{a=z-1}^N \sum_{\substack{|\xi-\theta|=z-1 \\ \xi_i \geq \theta_i }} \sum_{\substack{|\alpha|=a \\ \alpha \geq \xi-\theta}} (-1) \frac{(-1)^{z-1}}{(\alpha-(\xi-\theta))!(\xi-\theta)!} C^a_{\alpha} \, \partial_v^{\alpha-(\xi-\theta)} G_{\omega}( v) \, \partial_x^{\omega} v_{\theta} \, \partial_u^{\xi} \delta 
\end{align}  
Since $|\theta|=1$ and $\xi_i \geq \theta_i$ we have $|\xi-\theta|=z-1 \iff |\xi|=z$. We observe that \eqref{eq:simplified_tr_six} cancels with \eqref{eq:tr_three} except for $z=N+1$. For $z=N+1$, \eqref{eq:simplified_tr_six} has a term with $|\xi-\theta|=N$ and $|\alpha|=N$. But $\alpha \geq \xi-\theta$. Thus $\alpha=\xi-\theta$. The extra term reads
\begin{align} \label{eq:remaining_term}
\sum_{|\omega|=1} \sum_{|\theta|=1} \sum_{\substack{|\xi-\theta|=N \\ \xi_i \geq \theta_i }} (-) \frac{(-1)^{N}}{(\xi-\theta)!} C^N_{\xi-\theta} \,  G_{\omega}( v) \, \partial_x^{\omega} v_{\theta} \, \partial_u^{\xi} \delta 
\end{align}

\item We now simplify the sum of \eqref{eq:tr_two} and \eqref{eq:tr_five}. Their sum leaves the following extra term:
\begin{align}
\sum_{|\omega|=1} \sum_{|\theta|=1} \sum_{\substack{|\xi-\theta|=N \\ \xi_i \geq \theta_i }}   \frac{(-1)^{N}}{(\xi-\theta)!} C^N_{\xi-\theta} \,  G_{\omega}( v) \, \partial_x^{\omega} v_{\theta} \, \partial_u^{\xi} \delta 
\end{align}
this term cancels out with the the extra term \eqref{eq:remaining_term}. This completes the proof of the first case.
\end{itemize}
The analysis of the second case, when $\bm G=0$ and $R=0$ with the advection equation in the form of $\partial_t g_N + \nabla_u \cdot (\bm F g_N) = 0$, is analogous and we omit it. The analysis of the third case with $\bm G=0$ and $\bm F=0$ and with the advection equation in the form of $\partial_t g_N + \nabla_u \cdot (R g_N) = 0$ leads to the expression
\begin{align}
\partial_t g_N + R \, g_N = &-\sum_{k=0}^N \sum_{|\kappa|=k} \sum_{a=k}^N \sum_{\substack{|\alpha|=a \\ \alpha \geq \kappa}} \frac{(-1)^k}{\kappa!(\alpha-\kappa)!} \, C^{a}_{\alpha} \partial_v^{\alpha-\kappa} R( v) \partial_u^{\kappa}\delta \\
&+\sum_{a=0}^N \sum_{|\alpha|=a} \sum_{\kappa \leq \alpha} \frac{(-1)^k}{\kappa!(\alpha-\kappa)!} \, C^{a}_{\alpha}  \partial_v^{\alpha-\kappa} R( v) \partial_u^{\kappa}\delta
\end{align}
this expression vanishes if we consider the identity $\sum_{a=0}^N \sum_{k=0}^a c_{ak} = \sum_{k=0}^N \sum_{a=k}^N c_{ak}$ along with the identity \eqref{eq:sum_identity}.

\end{proof}

\section{Particle models for conservation laws} \label{sec:particles}

In order to derive a particle method, we start with a function $Z_p \in C^1([0,T]; \R^{d_{\bm u} + d_{\bm x}})$ and define the phase space moment of degree $s$ as
\begin{align}
\bm W^s(t) = \int_{\Omega} (\bm z-\bm Z_p)^{\otimes s} \, g_N \, dz .
\end{align}
We assume the distribution function $g$ takes the form of a special ansatz, that we borrow from \cite{burby2025}, see also Scovel-Weinstein \cite{scovel},
\begin{align} \label{eq:particle_ansatz}
g_N(\bm u, \bm x, t) &= W^0 \delta(\bm z- \bm Z_p) - \bm W^1_{i_1} \partial_{z_{i_1}} \delta(\bm z-\bm Z_p) \\ \notag 
&+ \frac{1}{2}\bm W^2_{i_1, i_2} \partial_{z_{i_1}} \partial_{z_{i_2}} \delta(\bm z-\bm Z_p) + ...+\frac{(-1)^N}{N!} \bm W^N_{i_1...i_N}\partial_{z_{i_1}}... \partial_{z_{i_N}} \delta(\bm z-\bm Z_p)
\end{align}
where $\bm z:=(\bm u, \bm x)^\top$ is the phase space coordinate. Here, our goal is to use this ansatz to construct exact particle models for conservation laws with vanishing source term $R =0$.
This ansatz function of degree $N$ reproduces the phase space moments up to degree $N$ exactly. Our goal is to derive evolution equations for the variables $\{ \bm Z_p, W^0, \bm W^1, ..., \bm W^N\}$ such that the ansatz \eqref{eq:particle_ansatz} is an exact solution of the conservation law. 

Let us rewrite the conservation law \eqref{eq:general_advection_non_conserv} in the following form
\begin{align} \label{eq:conservation_law_phase_space}
\partial_t g(\bm z, t) + \nabla_z \cdot (\bm A(\bm z, t) g(\bm z, t)) = 0
\end{align}
where we introduced the phase space advection coefficient $\bm A(\bm z, t)= \left[\bm F(\bm u, \bm x, t), \bm G(\bm u, \bm x, t) \right]^\top$. 
To construct reduced models we proceed by taking the phase space moments of the conservation law with respect to the center $\bm Z_p$ 
\begin{align} 
& \int \partial_t g(\bm z, t) \, dz + \int \nabla_z \cdot (\bm A(\bm z, t) g(\bm z, t)) \, dz = 0 \\ \notag
& \int \partial_t g(\bm z, t) \, (\bm z-\bm Z_p(t)) \, dz + \int \nabla_z \cdot (\bm A(\bm z, t) g(\bm z, t)) \, (\bm z-\bm Z_p(t)) \, dz = 0 \\ \notag
& \int \partial_t g(\bm z, t) \, (\bm z-\bm Z_p(t)) \otimes (\bm z-\bm Z_p(t)) \, dz + \int \nabla_z \cdot (\bm A(\bm z, t) g(\bm z, t)) \, (\bm z-\bm Z_p(t)) \otimes (\bm z-\bm Z_p(t)) \, dz = 0 \\ \notag
\vdots
\end{align}
We then use the product rule with respect to the time variable to obtain
\begin{align} \notag
& \partial_t W^0 =  0 \\ \notag
& \partial_t \bm W^1 = \int g(\bm z, t)  \partial_t (\bm z-\bm Z_p(t)) \, dz -\int \nabla_z \cdot (\bm A(\bm z, t) g(\bm z, t)) \, (\bm z-\bm Z_p(t)) \, dz \\ \label{eq:particle_necessary_equations_1}
& \partial_t \bm W^2 = \int g(\bm z, t) \,  \partial_t ((\bm z-\bm Z_p(t)) \otimes (\bm z-\bm Z_p(t))) \, dz  \\ \notag
& \hspace{3.5cm}-\int \nabla_z \cdot (\bm A(\bm z, t) g(\bm z, t)) \, (\bm z-\bm Z_p(t)) \otimes (\bm z-\bm Z_p(t)) \, dz \\ \notag
\vdots 
\end{align}
The moment system \eqref{eq:particle_necessary_equations_1} depends on the function $g$ and cannot be solved in the current form. To eliminate this dependence, we assume that $g=g_N$ as was defined in \eqref{eq:particle_ansatz}. In addition we need an expression for the center $\bm Z_p$, see Theorem \ref{prop:main_theorem_particles} below.

In what follows, we use the Definition \ref{def:e-tube} of $\varepsilon$-tube with $d$ equal to the dimension of phase space. We also consider only time-dependent functions. With this, in full analogy to \eqref{eq:fluid_multiple_derivatives}, we have
\begin{align} \label{eq:particles_multiple_derivatives}
\int \partial_{z_i}...\partial_{z_j} \delta(\bm z-\bm Z_p) f(\bm z, t) \, \phi(\bm z) \, dz = \int \partial_{z_i}...\partial_{z_j} \delta(\bm z - \bm Z_p) f(\bm z, t) \, dz
\end{align}

Now, we study whether the moment system \eqref{eq:particle_necessary_equations_1} can be made exact.

\begin{proposition} \label{prop:necessary_condition}
Consider a moment system $(W^0, \bm W^1, \ldots, \bm W^N)$ and its corresponding distribution \eqref{eq:particle_ansatz}. If the moment system is exact, then it satisfies the equations \eqref{eq:particle_necessary_equations_1}.
\end{proposition}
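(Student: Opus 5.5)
The plan is to mirror the proof of Proposition \ref{prop:fluid_necessary}, now working in phase space with the single variable $\bm z=(\bm u,\bm x)^\top$ and the time variable $t$. Exactness of the particle system means that $g_N$ from \eqref{eq:particle_ansatz} satisfies the distributional form of \eqref{eq:conservation_law_phase_space}:
\begin{align*}
\int \partial_t g_N \, \phi \, dz\,dt + \int \nabla_z \cdot (\bm A\, g_N)\, \phi \, dz\, dt = 0 \qquad \forall \phi \in C^\infty_0(\Omega).
\end{align*}
Into this I would insert tensor-product test functions of the form
\begin{align*}
\phi = (\bm z-\bm Z_p(t))^{\otimes s}\, \phi_z(\bm z)\, \phi_t(t), \qquad s=0,1,\dots,N,
\end{align*}
taken componentwise. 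Here $\phi_t\in C^\infty_0(0,T)$ is arbitrary. The factor $\phi_z\in C^\infty_0$ is chosen equal to $1$ on the $\varepsilon$-tube $T_\varepsilon(\bm Z_p)$ around the trajectory $\{\bm Z_p(t)\}$, which is compact in phase space. Such a $\phi$ has compact support because of $\phi_z\phi_t$. By \eqref{eq:particles_multiple_derivatives}, every integral against $g_N$, whose support is the single point $\bm Z_p(t)$, can then be evaluated as if $\phi_z\equiv 1$.

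For the time-derivative term I apply the product rule, writing $\partial_t g_N\,(\bm z-\bm Z_p)^{\otimes s}$ as $\partial_t[g_N(\bm z-\bm Z_p)^{\otimes s}] - g_N\,\partial_t(\bm z-\bm Z_p)^{\otimes s}$. I then integrate the first piece by parts in $t$, which moves $\partial_t$ onto $\phi_t$; no other factor of the test function depends on $t$ once $(\bm z-\bm Z_p)^{\otimes s}$ has been absorbed into the bracket. Next I use \eqref{eq:particles_multiple_derivatives} to drop $\phi_z$, so that the $z$-integral becomes exactly $\bm W^s(t)$. This holds because the ansatz reproduces the phase-space moments up to degree $N$: integrating by parts against derivatives of $\delta$ picks out precisely the coefficient $\bm W^s$. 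A final integration by parts in $t$ turns $-\int \bm W^s\,\partial_t\phi_t\,dt$ into $\int \partial_t \bm W^s\,\phi_t\,dt$. In the second piece, $\partial_t(\bm z-\bm Z_p)^{\otimes s}$ is a polynomial in $\bm z$, so \eqref{eq:particles_multiple_derivatives} again lets me remove $\phi_z$. This yields $\int g_N\,\partial_t(\bm z-\bm Z_p)^{\otimes s}\,dz$.

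For the flux term $\int \nabla_z\cdot(\bm A g_N)\,(\bm z-\bm Z_p)^{\otimes s}\phi_z\phi_t$, I integrate by parts in $\bm z$ to move the divergence onto $(\bm z-\bm Z_p)^{\otimes s}\phi_z$. The terms in which $\nabla_z\phi_z$ appears vanish: they are paired with $\delta$ and its derivatives at $\bm Z_p$, and every derivative of $\phi_z$ is zero on the tube. I then integrate by parts back and remove $\phi_z$ by \eqref{eq:particles_multiple_derivatives}, obtaining $\int \nabla_z\cdot(\bm A g_N)(\bm z-\bm Z_p)^{\otimes s}dz$.

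Collecting terms gives $\int [\partial_t\bm W^s - (\text{right-hand side of }\eqref{eq:particle_necessary_equations_1})]\,\phi_t\,dt = 0$ for all $\phi_t$. The fundamental lemma of the calculus of variations then gives \eqref{eq:particle_necessary_equations_1} for each $s$, with $s=0$ giving $\partial_t W^0=0$ since the flux integrates to zero.

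The main obstacle is justifying the removal of the cutoff $\phi_z$ when derivatives of $\delta$ act on products. Here I would invoke \eqref{eq:particles_multiple_derivatives} exactly as Lemma \ref{prop:lemma_delta_fluid} was used in Proposition \ref{prop:fluid_necessary}: all Leibniz terms involving derivatives of $\phi_z$ vanish at $\bm Z_p(t)$. A further subtle point is the time dependence of $\bm Z_p$ inside the test function. It is only $C^1$, so I would either accept test functions of that regularity (the distributions involved have order at most $N$ in $\bm z$ and order $1$ in $t$) or mollify $\bm Z_p$ and pass to the limit.
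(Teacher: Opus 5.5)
Your proposal is correct and follows the route the paper intends. The paper omits this proof, saying only that it mirrors Proposition~\ref{prop:fluid_necessary}, and your argument is exactly that transcription to phase space: test functions $(\bm z-\bm Z_p)^{\otimes s}\phi_z\phi_t$ with $\phi_z\equiv 1$ on the $\varepsilon$-tube, the product rule in $t$, removal of the cutoff via \eqref{eq:particles_multiple_derivatives}, and the fundamental lemma in $t$. Your remark that $\bm Z_p$ is only $C^1$, so the test function is not smooth, is handled soundly through the finite order of $g_N$ or by mollification, and this is a point the paper leaves implicit.
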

The proposition states a necessary condition for exactness and is given to motivate the construction.
The proof is similar to the Proposition \ref{prop:fluid_necessary} and we omit it.

To close the moment system an equation for the center of the moments should be specified. The center is a redundant variable with some freedom of choice. There is, however, an expression for the center that keeps the moment system in \eqref{eq:particle_necessary_equations_1} exact.
\begin{theorem} \label{prop:main_theorem_particles}
Let the moment hierarchy of degree $N$ be constructed with ansatz \eqref{eq:particle_ansatz} of degree $N$. Let the center satisfy
\begin{align} \label{eq:velocity_particles}
\partial_t \bm Z_p =& \bm A(\bm Z_p, t)
\end{align}
then the reduced model coupled to $\bm {\mathcal A}[\bm A, g]=0$, with $g$ given by \eqref{eq:particle_ansatz} of degree $N$, is an exact solution of the conservation law \eqref{eq:conservation_law_phase_space}.
\end{theorem}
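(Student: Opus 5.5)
The argument follows the template of the proof of Theorem \ref{prop:main_theorem}, but it is simpler because the center $\bm Z_p$ depends only on $t$. There is therefore no spatial transport of the center, and only the "$\bm F$-type" case survives: the phase-space flux $\nabla_z\cdot(\bm A g)$ plays the role of $\nabla_u\cdot(\bm F g)$, with $\bm A$ replacing $\bm F$ and $\bm z$ replacing $\bm u$.

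I write the ansatz \eqref{eq:particle_ansatz} in multi-index form as
\begin{align*}
g_N=\sum_{a=0}^N\sum_{|\alpha|=a}\frac{(-1)^a}{\alpha!}W^a_\alpha\,\partial_z^\alpha\delta(\bm z-\bm Z_p),
\end{align*}
where $\alpha$ is now a multi-index over the $2d$ phase-space directions. Applying the chain rule \eqref{eq:chain_t} to $\delta(\bm z-\bm Z_p)$ gives
\begin{align*}
\partial_t g_N=\sum_{k,\kappa}\frac{(-1)^k}{\kappa!}\,\partial_tW^k_\kappa\,\partial_z^\kappa\delta-\sum_{k,\kappa}\sum_{|\theta|=1}\frac{(-1)^k}{\kappa!}\,W^k_\kappa\,\partial_t Z_{p,\theta}\,\partial_z^{\kappa+\theta}\delta .
\end{align*}
The second sum is the $\partial_t\bm Z_p$ contribution.

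\textit{Step 1: evolution of the moments.} Substitute $g_N$ and $\partial_t\bm Z_p=\bm A(\bm Z_p)$ into \eqref{eq:particle_necessary_equations_1}. The term $\int g_N\,\partial_t(\bm z-\bm Z_p)^\kappa\,dz$ becomes $-\sum_\theta \frac{\kappa!}{(\kappa-\theta)!}W^{k-1}_{\kappa-\theta}A_\theta(\bm Z_p)$. The flux term, after integrating by parts in $\bm z$, becomes $\int g_N\,\bm A\cdot\nabla_z(\bm z-\bm Z_p)^\kappa\,dz$. Expanding this with the multivariate Leibniz rule and the sifting property exactly as in \eqref{eq:45_last}, the only surviving index is $\beta=\kappa-\theta$, and one obtains
\begin{align*}
\partial_t W^k_\kappa=-\sum_{|\theta|=1}\frac{\kappa!}{(\kappa-\theta)!}W^{k-1}_{\kappa-\theta}A_\theta(\bm Z_p)+\sum_{|\theta|=1}\sum_{a\ge k-1}\sum_{\substack{|\alpha|=a\\ \alpha\ge\kappa-\theta}}\frac{\kappa!}{(\alpha-\kappa+\theta)!(\kappa-\theta)!}W^a_\alpha\,\partial_z^{\alpha-\kappa+\theta}A_\theta(\bm Z_p).
\end{align*}
This is the analogue of \eqref{eq:dtCk} with $\bm G=0$ and $\bm F\to\bm A$. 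For $k=0$ both sums vanish, which is consistent with $\partial_tW^0=0$.

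\textit{Step 2: the flux distribution.} Test $\nabla_z\cdot(\bm A g_N)$ against a test function $\phi$ and move the derivative onto $\phi$. Then move the $\partial_z^\alpha$ from $\delta$ onto $A_\theta\,\partial_\theta\phi$ by the Leibniz rule, and evaluate at $\bm Z_p$. Finally, re-express the result as a distribution of the form $\sum (\cdot)\,\partial_z^{\kappa+\theta}\delta$, using \eqref{eq:particles_multiple_derivatives} to drop cutoffs. This yields
\begin{align*}
\nabla_z\cdot(\bm A g_N)=\sum_{|\theta|=1}\sum_{a}\sum_{|\alpha|=a}\sum_{\kappa\le\alpha}\frac{(-1)^{|\kappa|+1}}{\kappa!(\alpha-\kappa)!}W^a_\alpha\,\partial_z^{\alpha-\kappa}A_\theta(\bm Z_p)\,\partial_z^{\kappa+\theta}\delta .
\end{align*}

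\textit{Step 3: cancellation.} Add the three contributions and organise them by the order of $\partial_z^\xi\delta$.

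First, the $\partial_t\bm Z_p$ term equals $-\sum\frac{(-1)^k}{\kappa!}W^k_\kappa A_\theta\,\partial^{\kappa+\theta}\delta$. It cancels the first part of Step 1 after the shift $\xi=\kappa+\theta$, using $\frac{1}{\xi!}\frac{\xi!}{(\xi-\theta)!}=\frac{1}{(\xi-\theta)!}$. What remains is only the top-order piece with $|\xi|=N+1$.

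Second, the double sum of Step 1 and the term of Step 2 are relabelled with $\xi=\kappa+\theta$ and with the summation identity \eqref{eq:sum_identity}. After this they cancel except again at $|\xi|=N+1$, where $\alpha=\xi-\theta$.

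The two leftover top-order terms are $\mp\frac{(-1)^N}{(\xi-\theta)!}W^N_{\xi-\theta}A_\theta(\bm Z_p)\,\partial^\xi\delta$ with opposite signs, so they cancel, as in the 1u1x argument.

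I expect the main obstacle to be the index bookkeeping in Step 3. In particular, the combinatorial factor $\frac{\kappa!}{(\kappa-\theta)!}$ arising from $\partial_z^\kappa$ acting on a single component must be checked carefully against the relabelling $\xi=\kappa+\theta$, together with the constraint $\xi_i\ge\theta_i$. A secondary point is justifying the density of tensor-product test functions and the cutoff argument \eqref{eq:particles_multiple_derivatives}. As in Remark \ref{remark:main_remark}, $\bm{\mathcal A}$ enters only through the determination of $\bm A$ and plays no role in the cancellation.
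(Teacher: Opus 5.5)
You are following the paper's own route: your Step~1 formula for $\partial_t W^k_\kappa$ is exactly the paper's, and your two pairings in Step~3 are the ones the paper uses. Those pairings are the $\partial_t\bm Z_p$ term against the $W^{k-1}_{\kappa-\theta}$ part of $\partial_t W^k_\kappa$, and the $W^a_\alpha$ double sum against the flux, each leaving a single $|\xi|=N+1$ remainder. The problem is the displayed Step~2 formula, which has the wrong overall sign. The coefficient must be $\frac{(-1)^{|\kappa|}}{\kappa!(\alpha-\kappa)!}$, not $\frac{(-1)^{|\kappa|+1}}{\kappa!(\alpha-\kappa)!}$. Moving $\nabla_z$ onto $\phi$ gives a factor $-1$, and rewriting $\partial_z^{\kappa+\theta}\phi(\bm Z_p)$ as a pairing with $\partial_z^{\kappa+\theta}\delta$ gives $(-1)^{|\kappa|+1}$; you kept only the latter. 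Already for $N=0$ one has $\nabla_z\cdot(\bm A\,W^0\delta)=W^0\sum_{|\theta|=1}A_\theta(\bm Z_p)\,\partial_z^{\theta}\delta$. This is what cancels $W^0\partial_t\delta=-W^0\sum_{|\theta|=1}A_\theta(\bm Z_p)\,\partial_z^{\theta}\delta$. Your formula gives the opposite sign, so the two terms would add instead of cancel.

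This sign error breaks your Step~3 as written. Write $\beta=\kappa-\theta$. The double sum of Step~1 contributes $\frac{(-1)^{|\beta|+1}}{\beta!(\alpha-\beta)!}W^a_\alpha\,\partial_z^{\alpha-\beta}A_\theta(\bm Z_p)\,\partial_z^{\beta+\theta}\delta$ to $\partial_t g_N$ for $|\beta|\le N-1$. That is literally the same expression as your Step~2 term, so the second cancellation you assert does not follow from your own formulas.

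With the corrected sign $(-1)^{|\beta|}$, the argument closes:
\begin{itemize}
\item The two sums cancel for $|\beta|\le N-1$.
\item The leftover flux terms have $|\beta|=N$, which forces $\alpha=\beta$. Summed over $|\theta|=1$ and $|\xi|=N+1$ with $\xi\ge\theta$, they equal $+\frac{(-1)^N}{(\xi-\theta)!}W^N_{\xi-\theta}A_\theta(\bm Z_p)\,\partial_z^{\xi}\delta$.
\item This cancels the remainder $-\frac{(-1)^N}{(\xi-\theta)!}W^N_{\xi-\theta}A_\theta(\bm Z_p)\,\partial_z^{\xi}\delta$ of your first pairing, exactly as in the paper.
\end{itemize}
Once you correct that sign, your proof coincides with the paper's.
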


The proof is similar to the proof of Theorem \ref{prop:main_theorem} and is given in Section \ref{sec:proof_main_theorem_particles}.

\begin{remark} \label{remark:second_remark}
Again $\bm {\mathcal A}$ will not appear explicitly in the proof. The proof applies as long as $\bm {\mathcal A}$ allows to determine the coefficients from the distribution $g_N$.   
\end{remark}

\textbf{Many particles.}
The extension of the theorem to a finite number of particles and species is straightforward. For example, let $g_{N,p}$ denote the ansatz of order $N$ for  the$p$-th particle and defined according to \eqref{eq:particle_ansatz}. Each particle represents a portion of the probability density function, along with the corresponding moments of that portion with respect to the center of the given particle. The total probability density function is then given by the sum $g_N = \sum_p g_{N,p}$. Since the moments of the whole distribution can be split additively into the moments of sub-distributions and the moments of the sub-distributions are reproduced exactly, it follows that the ansatz $g_N$ approximates moments up to degree $N$ exactly. To compute coefficients, the sum over all particles $\mathcal A[\bm A(\bm z,t), g_N=\sum_p g_{N,p}]$ is used. In view of Remark \ref{remark:second_remark}, the exactness of the moment system for each particle is not affected, i.e., Eq. \eqref{eq:conservation_law_phase_space} still holds for each $g_{N,p}$. Since $\mathcal A[\bm A(\bm z,t), g_N=\sum_p g_{N,p}]$ is the same for all particles, the advection coefficient $\bm A(\bm z, t)$ is also the same. Therefore, Eq. \eqref{eq:conservation_law_phase_space} written out for individual particles sums up and holds for $g=g_N=\sum_p g_{N,p}$. This shows that the model with multiple particles remains exact. 

Let us give explicit equations for the degree zero, degree one, degree two models. The degree zero model is the classical particle-in-cell model with constant weights:
\begin{align} \label{eq:general_degree_zero_particle_model}
\partial_t \bm Z_p &= \bm A(\bm Z_p, t) \\ \notag
\partial_t W^0 &= 0 
\end{align}
The degree one model reads
\begin{align} \notag
\partial_t \bm Z_p &= \bm A(\bm Z_p, t) \\ \label{eq:general_degree_one_particle_model}
\partial_t W^0 &= 0 \\ \notag
\partial_t \bm W^1 &= \bm W^1 \cdot \nabla_{z} \bm A(\bm Z_p, t)
\end{align}
The degree two model reads
\begin{align} \notag
\partial_t \bm Z_p &= \bm A(\bm Z_p, t) \\ \notag
\partial_t W^0 &= 0 \\ \label{eq:general_degree_two_particle_model}
\partial_t \bm W^1 &= \bm W^1 \cdot \nabla_{z} \bm A(\bm Z_p, t) + \frac{1}{2!} \bm W^2 : \nabla_z \nabla_z \bm A(\bm Z_p, t) \\ \notag
\partial_t \bm W^2 &=  \bm W^2 \cdot \nabla_z \bm A(\bm Z_p,t) + \left( \bm W^2 \cdot \nabla_z \bm A(\bm Z_p, t) \right)^\top 
\end{align}

\subsection{Proof of Theorem \ref{prop:main_theorem_particles}} \label{sec:proof_main_theorem_particles}
In this section we prove Theorem \ref{prop:main_theorem_particles}.
We refer to Section \ref{sec:proof_main_theorem} for the use of the multi-index notation. We use the multi-index notation to rewrite the particle ansatz \eqref{eq:particle_ansatz}
\begin{align}
g_N = \sum_{a=0}^N \sum_{|\alpha|=a} \frac{(-1)^a}{\alpha!} W^a_{\alpha} \partial^{\alpha} \delta(\bm z-\bm Z_p)
\end{align}
To ease the notation, we will simply write $\delta := \delta(\bm z- \bm Z_p)$.
 
To show exactness of the moment system we show that $g_N$ satisfies
\begin{align}
\int \partial_t g_N \phi(\bm z, t) \, dz  + \int \nabla_z \cdot \left(\bm A(\bm z, t) g_N \right) \, \phi(\bm z, t) \, dz = 0 \hspace{1cm} \forall \phi \in C^{\infty}_0(\Omega)
\end{align}

First, we take the time derivative of the ansatz
\begin{align}
\partial_t g_N = \sum_{k=0}^N \sum_{|\kappa|=k} \frac{(-1)^k}{\kappa!} \partial_t W^k_{\kappa} \partial^{\kappa} \delta + \sum_{k=0}^N \sum_{|\kappa|=k} \frac{(-1)^k}{\kappa!} W^k_{\kappa} \partial^{\kappa} \partial_t \delta 
\end{align}
the time derivative of the delta function can be expressed with the multi-index notation as follows $\partial_t \delta = \sum_{|I|=1} \partial^{I} \delta \, (-\partial_t Z^I_p)$. Using $\partial_t Z^I_p = A_I(Z_p,t)$, we obtain
\begin{align} \label{eq:dt_g_N}
\partial_t g_N = \sum_{k=0}^N \sum_{|\kappa|=k} \frac{(-1)^k}{\kappa!} \partial_t W^k_{\kappa} \partial^{\kappa} \delta - \sum_{k=0}^N \sum_{|\kappa|=k} \sum_{|I|=1} \frac{(-1)^k}{\kappa!} W^k_{\kappa} \partial^{\kappa+I} \delta \, A_I(Z_p,t) 
\end{align}

\textit{Step 1: Calculate $\partial_t \bm W^k$} \\
This step is analogous to Step 1 in Section \ref{sec:proof_main_theorem}. The only difference is that we consider the phase space integrals to derive the equations for the moments; we state the result directly

\begin{align} \label{eq:w_time_rate}
\partial_t W^k_{\kappa} = \sum_{|\theta|=1} -\frac{\kappa!}{(\kappa-\theta)!} W^{k-1}_{\kappa-\theta} A_{\theta}(Z_p) + \sum_{|\theta|=1} \sum_{a=k-1}^{N} \sum_{\substack{|\alpha|=a \\ \alpha \geq \kappa - \theta}} \frac{\kappa!}{(\kappa-\theta)!(\alpha-(\kappa-\theta))!} W^a_{\alpha} \partial^{\alpha-(\kappa-\theta)} A_{\theta}(Z_p)
\end{align}
where both terms are non-zero iff $\kappa_i \geq \theta_i$ for all $i$. 

Now substituting \eqref{eq:w_time_rate} into \eqref{eq:dt_g_N} we obtain
\begin{align} \label{eq:dt_g}
\partial_t g_N & = \sum_{|\theta|=1} \sum_{k=1}^N  \sum_{a=k-1}^{N} \sum_{|\kappa|=k}  \sum_{\substack{|\alpha|=a \\ \alpha \geq \kappa - \theta}} \frac{(-1)^k }{(\kappa-\theta)!(\alpha-(\kappa-\theta))!} W^a_{\alpha} \partial^{\alpha-(\kappa-\theta)} A_{\theta}(Z_p) \partial^{\kappa} \delta \\ \notag
&+ \sum_{|\theta|=1} \sum_{k=1}^N \sum_{|\kappa|=k}  \frac{(-1)^{k+1}}{(\kappa-\theta)!} W^{k-1}_{\kappa-\theta} A_{\theta}(Z_p)  \partial^{\kappa} \delta 
- \sum_{|\theta|=1} \sum_{k=0}^N \sum_{|\kappa|=k}  \frac{(-1)^k}{\kappa!} W^k_{\kappa}   \, A_{\theta}(Z_p,t) \partial^{\kappa+\theta} \delta
\end{align}
where the index in the first two terms begins at $k=1$ since these terms vanish for $k=0$. 
Next, we shift the range of indices in the last term to get
\begin{align}
\sum_{|\theta|=1} \sum_{k=0}^N \sum_{|\kappa|=k}  \frac{(-1)^k}{\kappa!} W^k_{\kappa}   \, A_{\theta}(Z_p,t) \partial^{\kappa+\theta} \delta = \sum_{|\theta|=1} \sum_{k=1}^{N+1} \sum_{|\kappa|=k}  \frac{(-1)^{k-1}}{(\kappa-\theta)!} W^{k-1}_{\kappa-\theta}   \, A_{\theta}(Z_p,t) \partial^{\kappa} \delta 
\end{align}
the last two terms simplify so that \eqref{eq:dt_g} reads
\begin{align} \notag
\partial_t g_N & = \sum_{|\theta|=1} \sum_{k=1}^N  \sum_{a=k-1}^{N} \sum_{|\kappa|=k}  \sum_{\substack{|\alpha|=a \\ \alpha \geq \kappa - \theta}} \frac{(-1)^k }{(\kappa-\theta)!(\alpha-(\kappa-\theta))!} W^a_{\alpha} \partial^{\alpha-(\kappa-\theta)} A_{\theta}(Z_p) \partial^{\kappa} \delta \\ \label{eq:final_expression_dtg}
& - \sum_{|\theta|=1} \sum_{|\kappa|=N+1}  \frac{(-1)^N}{(\kappa-\theta)!} W^N_{\kappa-\theta}   \, A_{\theta}(Z_p,t) \partial^{\kappa} \delta
\end{align}

\textit{Step 2: Simplify $\nabla_z \cdot \left( \bm A g_N \right)$} \\

Let us now consider the advection term. The simplification here is similar to Step 2 in Section \ref{sec:proof_main_theorem}. We recall that $\phi = \phi(\bm z, t) \in C^{\infty}_0(\Omega)$.
\begin{align} \notag
\int \nabla_z \cdot \left( \bm A(\bm z, t) g_N \right) \, \phi \, dz &= \int \sum_{|I|=1} \partial^I \left( A_I \, \sum_{a=0}^N \sum_{|\alpha|=a} \frac{(-1)^a}{\alpha!} W^a_{\alpha} \partial^{\alpha} \delta \right) \, \phi \, dz \\ \notag
& =-\int \sum_{|I|=1} A_I \, \sum_{a=0}^N \sum_{|\alpha|=a} \frac{(-1)^a}{\alpha!} W^a_{\alpha} \partial^{\alpha} \delta \, \partial^I  \phi \, dz \\ \notag
& =-\int \sum_{|I|=1} \, \sum_{a=0}^N \sum_{|\alpha|=a} \frac{(-1)^{2a}}{\alpha!} W^a_{\alpha} \delta \, \partial^{\alpha} \left( A_I \, \partial^I  \phi \right) \, dz \\ \notag
& =-\int \sum_{|I|=1} \, \sum_{a=0}^N \sum_{|\alpha|=a} \frac{1}{\alpha!} W^a_{\alpha} \delta \, \sum_{\beta \leq \alpha} \binom{\alpha}{\beta} \partial^{\alpha-\beta} A_I \, \partial^{\beta+I}   \phi  \, dz \\ \label{eq:advection_int}
& =\int \sum_{|I|=1} \, \sum_{a=0}^N \sum_{|\alpha|=a} \sum_{\beta \leq \alpha} \frac{(-1)^{b}}{(\alpha-\beta)! \beta!} W^a_{\alpha} \,  \partial^{\beta+I} \delta \,  \partial^{\alpha-\beta} A_I(Z_p)  \,   \phi  \, dz
\end{align}
here in the last step we used the general Leibniz product rule $\partial^{\alpha} \left( a^{\mu} b^{\nu} \right) = \sum_{\beta \leq \alpha} \binom{\alpha}{\beta} \partial^{\alpha-\beta} \, a^{\mu} \, \partial^{\beta} \, b^{\nu}$.
Next, we invoke the identities $\sum_{|\alpha|=a} \sum_{\beta \leq \alpha} f(\alpha, \beta) = \sum_{b=0}^{a} \sum_{|\beta|=b} \sum_{\substack{ \alpha \geq \beta \\ |\alpha|=a} } f(\alpha, \beta)$ and $\sum_{a=0}^N \sum_{b=0}^a f(a,b)=\sum_{b=0}^N \sum_{a=b}^N f(a,b)$ 
so that the integrant of \eqref{eq:advection_int} reads
\begin{align}
\sum_{|\theta|=1} \, \sum_{b=0}^N  \sum_{a=b}^{N} \sum_{|\beta|=b} \sum_{\substack{ \alpha \geq \beta \\ |\alpha|=a} }  \frac{(-1)^{b}}{(\alpha-\beta)! \beta!} W^a_{\alpha} \,  \partial^{\alpha-\beta} A_{\theta}(Z_p)  \, \partial^{\beta+\theta} \delta 
\end{align}
next we relabel $k=b+1$ and $\kappa = \beta + \theta$, and rewrite the sum
\begin{align} \notag
\sum_{|\theta|=1} \, \sum_{b=0}^N  \sum_{a=b}^{N} \sum_{|\beta|=b} \sum_{\substack{ \alpha \geq \beta \\ |\alpha|=a} } f(a, b, \alpha, \beta, \theta)  & = \sum_{|\theta|=1} \, \sum_{k-1=0}^N  \sum_{a=k-1}^{N} \sum_{|\kappa-\theta|=k-1} \sum_{\substack{ \alpha \geq \kappa-\theta \\ |\alpha|=a} } f(a, k-1, \alpha, \kappa-\theta, \theta) \\ \notag
&= \sum_{|\theta|=1} \, \sum_{k=1}^{N+1}  \sum_{a=k-1}^{N} \sum_{|\kappa|=k} \sum_{\substack{ \alpha \geq \kappa-\theta \\ |\alpha|=a} } f(a, k-1, \alpha, \kappa-\theta, \theta)
\end{align}
using this relation we obtain an expression
\begin{align}
\nabla_z \cdot \left( \bm A(\bm z, t) g_N \right) =  \sum_{|\theta|=1} \, \sum_{k=1}^{N+1}  \sum_{a=k-1}^{N} \sum_{|\kappa|=k} \sum_{\substack{ |\alpha|=a \\ \alpha \geq \kappa-\theta } }  \frac{(-1)^{k-1}}{(\alpha-(\kappa-\theta))! (\kappa-\theta)!} W^a_{\alpha} \,  \partial^{\alpha-(\kappa-\theta)} A_{\theta}(Z_p) \,  \partial^{\kappa} \delta 
\end{align}
that we further split by splitting the sum $\sum_{k=1}^{N+1} = \sum_{k=1}^{N} + ...$
\begin{align} \notag
\nabla_z \cdot \left( \bm A(\bm z, t) g_N \right) &=  \sum_{|\theta|=1} \, \sum_{k=1}^{N}  \sum_{a=k-1}^{N} \sum_{|\kappa|=k} \sum_{\substack{ |\alpha|=a \\ \alpha \geq \kappa-\theta } }  \frac{(-1)^{k-1}}{(\alpha-(\kappa-\theta))! (\kappa-\theta)!} W^a_{\alpha} \,  \partial^{\alpha-(\kappa-\theta)} A_{\theta}(Z_p) \,  \partial^{\kappa} \delta \\ \label{eq:final_expression_div_A}
&+ \sum_{|\theta|=1} \sum_{\substack{ |\kappa|=N+1 }} \frac{(-1)^N}{(\kappa-\theta)!} W^N_{\kappa-\theta} A_{\theta}(\bm Z_p,t) \partial^{\kappa} \delta 
\end{align}
summing up the terms in \eqref{eq:final_expression_dtg} and \eqref{eq:final_expression_div_A} we see that the sum vanishes and it follows that $\partial_t g_N + \nabla_z \cdot (\bm A \, g_N) = 0$.

\subsection{Hybrid-model}
To construct hybrid model, we note that Theorem \ref{prop:main_theorem} and Theorem \ref{prop:main_theorem_particles} hold without explicit assumptions about $\mathcal A$. We can construct the hybrid model as follows (recall we assumed $R=0$ in the particles case): 
\begin{proposition} \label{prop:superposition}
Let $g_f$ and $g_p$ be the fluid \eqref{eq:burby_ansatz} and particle ansatz \eqref{eq:particle_ansatz}, respectively. Let the moment variables and the particle variables be evolved according to moment systems in Theorem \ref{prop:main_theorem} and \ref{prop:main_theorem_particles}. Furthermore, assume the coefficients are solved from $\bm{\mathcal A}[\bm G, \bm F, R, g]=0$ with $g=g_f + g_p$. Then, the hybrid ansatz  $g=g_f+g_p$ satisfies the conservation law \eqref{eq:general_advection_non_conserv}  with $R=0$. 
\end{proposition}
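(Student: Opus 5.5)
The plan is to exploit linearity of the conservation law in $g$ once the coefficients are frozen, together with Remarks \ref{remark:main_remark} and \ref{remark:second_remark}: the proofs of Theorems \ref{prop:main_theorem} and \ref{prop:main_theorem_particles} never use how $\bm G,\bm F$ are produced from $g$. I would first fix the coefficients $\bm G,\bm F$ (with $R=0$) obtained by solving $\bm{\mathcal A}[\bm G,\bm F,0,g_f+g_p]=0$. These are smooth functions of $(\bm u,\bm x,t)$ and are identical for the fluid part and the particle part. Relative to these fixed coefficients, I evolve $(\bm v, C^0,\dots,\bm C^N)$ by the fluid moment system, with \eqref{eq:velocity} for the center, and $(\bm Z_p,W^0,\dots,\bm W^N)$ by the particle system, with \eqref{eq:velocity_particles} for the center, taking $\bm A=[\bm F,\bm G]^\top$.

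Next I would apply Theorem \ref{prop:main_theorem} to $g_f$ with these given coefficients and $R=0$. Because its proof treats $\bm G,\bm F$ as arbitrary smooth functions, it gives $\partial_t g_f+\nabla\cdot(\bm G g_f)+\nabla_u\cdot(\bm F g_f)=0$ in the distributional sense of Definition \ref{def:exactness}. In the same way, Theorem \ref{prop:main_theorem_particles} applied with $\bm A=[\bm F,\bm G]^\top$ gives $\partial_t g_p+\nabla_z\cdot(\bm A g_p)=0$ distributionally. Since $\nabla_z\cdot(\bm A g)=\nabla_u\cdot(\bm F g)+\nabla\cdot(\bm G g)$ with $\bm z=(\bm u,\bm x)$, this is the same equation \eqref{eq:general_advection_non_conserv} with $R=0$. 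I would check that the test-function classes agree: the particle proof uses $\phi(\bm z,t)\in C^\infty_0(\Omega)$, which is the same space as in Definition \ref{def:exactness}.

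Then I would add the two weak identities. For every $\phi\in C^\infty_0(\Omega)$, each term $\int \partial_t g\,\phi$, $\int\nabla\cdot(\bm G g)\phi$ and $\int\nabla_u\cdot(\bm F g)\phi$ is linear in $g$ when $\bm G,\bm F$ are fixed. Summing the two identities therefore gives the weak form for $g=g_f+g_p$. The coefficients used throughout are exactly those solving $\bm{\mathcal A}$ with $g=g_f+g_p$, so the hybrid ansatz is an exact solution in the sense of Definition \ref{def:exactness}. This is the same argument as in the many-particles paragraph.

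The main obstacle is justifying that the coupled system is consistent. The coefficients depend on both $g_f$ and $g_p$, while each theorem was stated for coefficients determined by its own ansatz. I would handle this by the remarks: exactness in each theorem is an identity that holds pointwise in time for \emph{any} smooth coefficients, given that the moment variables evolve by the stated equations with those coefficients. So it suffices to assume, as the proposition does, that the coupled system with $\bm{\mathcal A}$ has a smooth solution. A minor point to verify is that $R=0$ is needed only because Theorem \ref{prop:main_theorem_particles} was proved with $R=0$.
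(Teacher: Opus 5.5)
Your proof is correct and is essentially the paper's own argument. You invoke Theorem \ref{prop:main_theorem} and Theorem \ref{prop:main_theorem_particles}, together with Remarks \ref{remark:main_remark} and \ref{remark:second_remark}, to get the conservation law separately for $g_f$ and $g_p$ with the common coefficients obtained from $g=g_f+g_p$, and then add the two identities by linearity in $g$. Your extra steps (identifying $\nabla_z\cdot(\bm A g)$ with $\nabla_u\cdot(\bm F g)+\nabla\cdot(\bm G g)$, and assuming the coupled system has a smooth solution) only make explicit what the paper leaves implicit.
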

\begin{proof}
According to Theorem \ref{prop:main_theorem} and Remark \ref{remark:main_remark}, if $g_f$ is parametrized in terms of moments then it holds: $\partial_t g_f + \nabla \cdot (\bm G \, g_f) + \nabla_u \cdot(\bm F \, g_f) = 0$. Similarly according to Proposition \ref{prop:main_theorem_particles} and Remark \ref{remark:second_remark} it holds $\partial_t g_p + \nabla \cdot (\bm G \, g_p) + \nabla_u \cdot(\bm F \, g_p)  = 0$ 
summing up the two $\partial_t g + \nabla \cdot (\bm G \, g) + \nabla_u \cdot(\bm F \, g) = 0$.
\end{proof}

The hybrid model resolves part of the plasma with fluid moments and part of the plasma with phase space moments (particles). We note however that this hybrid ansatz does not reproduce exactly neither the velocity moments nor the phase space moments as the fluid ansatz does not reproduce phase space moments while the particle ansatz does not reproduce velocity moments.

\section{Application to Plasma} \label{sec:application}
In this section, we obtain hybrid models for the Vlasov--Maxwell and the relativistic Vlasov--Maxwell systems. For the sake of convenience, we will change slightly the variables: instead of considering the center $\bm v$ we will consider the center given by $m \bm v$, where $m$ is the species' mass.

\subsection{Vlasov--Maxwell}
The Vlasov--Maxwell system for a single plasma species in self-consistent electro-magnetic field is given is Gauss units by:
\begin{align} \label{eq:maxwell_f}
& \partial_t f +  \nabla \cdot \left(\frac{\bm u}{m}  f \right) +  \nabla_u \cdot \left( e \left(\bm E + \frac{\bm u}{cm} \times \bm B \right) f \right) = 0 
\end{align}
where the fields $\bm E(\bm x, t), \bm B(\bm x, t)$ are calculated using Maxwell equations
\begin{align} \label{eq:maxwell_e}
& -\frac{1}{c} \partial_t \bm E + \nabla \times \bm B = \frac{4 \pi}{c} \bm J \\ \label{eq:maxwell_b}
& \frac{1}{c} \partial_t \bm B + \nabla \times \bm E = 0 \\ \label{eq:maxwell_rho}
& \nabla \cdot \bm E = 4 \pi e \tilde n \\ \label{eq:maxwell_divb}
& \nabla \cdot \bm B = 0
\end{align}
with 
\begin{align}
\tilde n = n - n_0  \hspace{1cm} n =  \int f du \hspace{1cm} \bm J = \frac{e}{m} \int f \, \bm u \, du
\end{align}
here $\bm x \in \Omega_x \subset \mathbb{R}^3$ is position, $\bm u \in \Omega_u \subset \mathbb{R}^3$ - momentum,  $f(\bm u, \bm x, t)$ - probability density function, $\bm E(\bm x, t)$ - electric field, $\bm B(\bm x, t)$ - magnetic field, $n(\bm x, t)$ - charge density, $n_0$ - constant neutralizing background density, $m$ - mass, and $e$ - charge.
 
We consider a single-species plasma for simplicity; however, the extension to any number of species is straight-forward. To obtain the fluid model we set $\bm G(\bm u, \bm x, t)=\frac{\bm u}{m}$, $\bm F(\bm u, \bm x, t)= e \left( \bm E(\bm x, t) + \right. $ $ \left. \frac{\bm u \times \bm B(\bm x, t)}{cm} \right)$, $R(\bm u, \bm x, t) = 0$. 
The degree two fluid system reads
\begin{align} \label{eq:ten_moment_system_v}
\partial_t \bm v &= -\bm v \cdot \nabla \bm v + \frac{e}{m} \left( \bm E(x,t) + \frac{\bm v}{c} \times \bm B \right) \\ \label{eq:ten_moment_system_n}
\partial_t n &= -m^{-1} \nabla \cdot (\bm P+ m n \bm v) \\ \label{eq:ten_moment_system_p}
\partial_t \bm P &= - \bm P \cdot \nabla \bm v -  \bm v \cdot \nabla \bm P -  (\nabla \cdot \bm v) \bm P +  \frac{e}{cm} \bm P \times \bm B - m^{-1} \nabla \cdot \bm S  \\ \label{eq:ten_moment_system_s}
\partial_t \bm S &= -\bm v \cdot \nabla \bm S - (\bm v \nabla) \cdot \bm S - \bm S \cdot \nabla \bm v -(\nabla \cdot \bm v) \bm S  + \frac{e}{cm} \left( \bm S \times^1 \bm B \right) + \frac{e}{cm} \left( \bm S \times^2 \bm B \right) 
\end{align} 

When the magnetic field $\bm B$ is zero, the fluid equations match the Hamiltonian exact closures obtained by Burby for the Vlasov--Poisson case \cite{burby2023}. The degree-zero and degree-one models can be obtained directly by setting $\bm P$ and/or $\bm S$ to zero in these equations.

The expression for the particle model is a bit more involved. We recall from Section \ref{sec:particles} that $\bm A = (\bm F, \bm G)$, so that the degree zero model reads
\begin{align} \label{eq:ten_moment_system_xk}
\partial_t \bm X_p &= \frac{\bm U_p}{m} \\ \label{eq:ten_moment_system_uk}
\partial_t \bm U_p &= e \left( \bm E(\bm X_p, t) + \frac{\bm U_p \times \bm B(\bm X_p, t)}{cm} \right)
\end{align}
that matches the classical particle-in-cell discretization.

To obtain the degree one particle model we denote $\bm A = (\bm A_u, \bm A_x)=(\bm F, \bm G) $ and its derivatives
\begin{align}
\nabla_z \bm A = \begin{bmatrix}
\nabla_u \bm A_u & \nabla_u \bm A_x \\
\nabla_x \bm A_u & \bm 0
\end{bmatrix}
\end{align}
where the gradients explicitly are given by
\begin{align}
& \nabla_u \bm A_u = \nabla_{u_i} \bm A_{u_j} \, \bm e_i \otimes \bm e_j = \frac{e}{cm} \varepsilon_{jmn} \delta_{im} B_n \, \bm e_i \otimes \bm e_j \\
& \nabla_u \bm A_x = \nabla_{u_i} \bm A_{x_j} \, \bm e_i \otimes \bm e_j =\frac{1}{m} \delta_{ij} \, \bm e_i \otimes \bm e_j \\
& \nabla_x \bm A_u = \nabla_{x_i} \bm A_{u_j} \, \bm e_i \otimes \bm e_j= e \left( \partial_i E_j + \varepsilon_{jmn}\frac{u_m \partial_i B_n}{cm} \right) \, \bm e_i \otimes \bm e_j 
\end{align}
and where all quantities are evaluated at the particle positions $\bm Z_p = (\bm U_p, \bm X_p)$. We also split the degree one moment
\begin{align}
\bm W^1 = \begin{bmatrix}
\bm W^1_u \\
\bm W^1_x
\end{bmatrix}
\end{align}
where $\bm W^1 \in \mathbb{R}^{d_u+d_x}$,$\bm W^1_u \in \mathbb{R}^{d_u}$, $\bm W^1_x \in \mathbb{R}^{d_x}$. Then the evolution equation for the moment of degree one reads
\begin{align} \notag
\partial_t \bm W^1 &= \begin{bmatrix}
\bm W^1_u \cdot \nabla_u \bm A_u + \bm W^1_x \cdot \nabla_x \bm A_u \\
\bm W^1_u \cdot \nabla_u \bm A_x + \bm W^1_x \cdot \nabla_x \bm A_x
\end{bmatrix}
\end{align}
and the particle model of degree one then reads
\begin{align}
\partial_t \bm X_p &= \frac{\bm U_p}{m} \\ 
\partial_t \bm U_p &= e \left( \bm E(\bm X_p, t) + \frac{\bm U_p \times \bm B(\bm X_p, t)}{cm} \right) \\ \
\partial_t \bm W^1 & =\begin{bmatrix}
\frac{e}{cm} \bm W^1_u \times \bm B(\bm X_p, t) + e \left( \bm W^1_x \cdot \nabla \bm E(\bm X_p, t) - \frac{( \bm W^1_x \cdot \nabla \bm B(\bm X_p, t) ) \times \bm U_p }{cm}\right) \\
\frac{1}{m} \bm W^1_u
\end{bmatrix}
\end{align}

The expression for the degree two system \eqref{eq:general_degree_two_particle_model} is convoluted, so we write only its constitutive terms. The terms of interest are: $\bm W^2 \cdot \nabla_z \bm A$ and $\bm W^2 : \nabla_z \nabla_z \bm A$. First, we split $\bm W^2$
\begin{align}
\bm W^2 = \begin{bmatrix}
\bm W^2_{uu} & \bm W^2_{ux} \\
\bm W^2_{xu} & \bm W^2_{xx}
\end{bmatrix}
\end{align}
then the first term reads
\begin{align}
\bm W^2 \cdot \nabla_z \bm A &=
\begin{bmatrix}
\bm W^2_{uu} & \bm W^2_{ux} \\
\bm W^2_{xu} & \bm W^2_{xx}
\end{bmatrix} \begin{bmatrix}
\nabla_u \bm A_u & \nabla_u \bm A_x \\
\nabla_x \bm A_u & \nabla_x \bm A_x
\end{bmatrix} 
\end{align}
and the second term reads
\begin{align} \notag
\bm W^2 : \nabla_z \nabla_z \bm A & = \begin{bmatrix}
\bm W^2 : \nabla_z \nabla_z \bm A_u \\
\bm W^2 : \nabla_z \nabla_z \bm A_x 
\end{bmatrix} \\ \notag 
&= 
\begin{bmatrix}
\bm W^2_{uu} : \nabla_u \nabla_u \bm A_u + \bm W^2_{ux} : \nabla_u \nabla_x \bm A_u + \bm W^2_{xu} : \nabla_x \nabla_u \bm A_u + \bm W^2_{xx} : \nabla_x \nabla_x \bm A_u \\
\bm W^2_{uu} : \nabla_u \nabla_u \bm A_x + \bm W^2_{ux} : \nabla_u \nabla_x \bm A_x + \bm W^2_{xu} : \nabla_x \nabla_u \bm A_x + \bm W^2_{xx} : \nabla_x \nabla_x \bm A_x 
\end{bmatrix} \\
&=\begin{bmatrix}
 \bm W^2_{ux} : \nabla_u \nabla_x \bm A_u + \bm W^2_{xu} : \nabla_x \nabla_u \bm A_u + \bm W^2_{xx} : \nabla_x \nabla_x \bm A_u \\
\bm 0
\end{bmatrix}
\end{align}
where the non-zero gradient components are defined by
\begin{align}
& \nabla_{x_k} \nabla_{u_i} A_{u_j} \, \bm e_k \otimes \bm e_i \otimes \bm e_j = \frac{e}{cm} \varepsilon_{jmn} \delta_{im} \partial_{k} B_n \bm e_k \otimes \bm e_i \otimes \bm e_j \\
& \nabla_{u_k} \nabla_{x_i} A_{u_j} \, \bm e_k \otimes \bm e_i \otimes \bm e_j = \frac{e}{cm} \left( \varepsilon_{jmn} \delta_{km} \partial_i B_n \right) \,  \bm e_k \otimes \bm e_i \otimes \bm e_j  \\
& \nabla_{x_k} \nabla_{x_i} A_{u_j} \, \bm e_k \otimes \bm e_i \otimes \bm e_j = e \left( \partial_{k} \partial_{i} E_j + \varepsilon_{jmn} \frac{u_m \partial_{k} \partial_{i} B_n}{cm} \right) \bm e_k \otimes \bm e_i \otimes \bm e_j 
\end{align}
and all quantities are evaluated at the particle positions $\bm Z_p = (\bm U_p, \bm X_p)$.

The electric and magnetic fields $\bm E, \bm B$ are calculated from the Maxwell equations. The fluid models of degree $>0$ give rise to the current on the right-hand side of the Amp\`ere's equation \eqref{eq:maxwell_e}
\begin{align}
\bm J=  \frac{e}{m} \int f \, \bm u \, du = \frac{e}{m} \left(\bm P + m \, n \, \bm v \right)
\end{align}
The current due to particles depends on the degree of the model. A naive computation of the current of the degree one gives
\begin{align} \label{eq:wrong_current}
\bm J = \frac{e}{m} \sum_p^N \left[ \bm U_p \, W^{0}_p \, \delta(\bm x-\bm X_p) + \bm W_{u,p}^1 \, \delta(\bm x-\bm X_p) - \bm W_{x,p}^1 \cdot \nabla_x \delta(\bm x-\bm X_p) \, \bm U_p \right]
\end{align}
where the subscript $p$ stands for the particle number. The current \eqref{eq:wrong_current} includes Dirac delta functions and as such the Amp\`ere's equation \eqref{eq:maxwell_e} is not well-defined for $\bm E \in C^{\infty}(\Omega)^d$. To overcome this we propose to regularize the current by convolution. Let $J_{i,p}$ denote the $i$-th component of the current $\bm J$ due to a particle with index $p$.
\begin{align} 
J_i(\bm x) &=  \sum_p^{N_p} \int_{\Omega}  J_{i,p}(\bm x - \bm \tau) K(\bm \tau) d \bm \tau
\end{align}
where $K(\bm \tau)$ is a smoothing kernel such as the Gaussian kernel. Or equivalently replacing $\delta(\bm x - \bm X_p)$ in the current \eqref{eq:wrong_current} with $K(\bm x - \bm X_p)$ and changing the sign where integration by parts is used.
\begin{align}  \label{eq:regul_current}
\bm J = \frac{e}{m} \sum_p^N \left[ \bm U_p \, W^{0}_p \, K(\bm x-\bm X_p) + \bm W_{u,p}^1 \, K(\bm x-\bm X_p) + \bm W_{x,p}^1 \cdot \nabla_x K(\bm x-\bm X_p) \, \bm U_p \right]
\end{align}
Due to the remark \ref{remark:second_remark}, the model with the modified current is exact solution of the modified Vlasov--Maxwell system. 
 
Since the models are exact solutions in the sense of distributions, it can be shown that they conserve mass, energy, and momentum in the sense of distributions. We study these properties in Section \ref{sec:relativistic} for the relativistic models.

\subsection{Relativistic Vlasov--Maxwell system} \label{sec:relativistic}
The relativistic Vlasov--Maxwell equation reads:
\begin{align} \label{eq:rel_maxwell_f}
& \partial_t f + \nabla \cdot \left( \frac{\bm u}{m \, \gamma(\bm u)}  f \right) + \nabla_u \cdot \left( e \left( \bm E + \frac{\bm u}{c \,m \, \gamma(\bm u) } \times \bm B \right)  f \right) = 0
\end{align}
and is coupled to Maxwell equations (\ref{eq:maxwell_e}-\ref{eq:maxwell_divb})
with 
\begin{align}
\gamma(\bm u) = \sqrt{1+\frac{|\bm u|^2}{m^2 c^2} } \hspace{1cm} \tilde n = n - n_0 \hspace{1cm} n = \int f \, du \hspace{1cm} \bm J = \frac{e}{m} \int \frac{\bm u}{\gamma(\bm u)} f \, du
\end{align}
To construct hybrid model we set $\bm G(\bm u, \bm x, t)=\frac{\bm u}{m \gamma(\bm u)}$, $\bm F(\bm u, \bm x, t) = e \left( \bm E(\bm x, t) + \frac{\bm u \times \bm B(\bm x, t)}{cm \gamma(\bm u)} \right)$, \\ $R(\bm u, \bm x, t)=0$. 
To obtain fluid equations, we first calculate derivatives of the coefficients
\begin{align} 
&\nabla_v \bm G(m\bm v) = \frac{\bm I}{\gamma(m\bm v)} - \frac{\bm v \otimes \bm v}{\ c^2 \ \gamma(m\bm v)^3} \\
&\nabla_v \nabla_v \bm G(m\bm v) = - \frac{\bm v \otimes \bm I}{ c^2 \, \gamma(m\bm v)^3} - \left( \frac{\delta_{ik} v_j + \delta_{ij} v_k}{c^2 \, \gamma(m\bm v)^3} \bm e_i \otimes \bm e_j \otimes \bm e_k \right) + \frac{3 \bm v \otimes \bm v \otimes \bm v}{c^4 \, \gamma(m\bm v)^5} \\
& \nabla_v \cdot \bm F(m\bm v) = 0 \\
& \nabla_v \bm F(m\bm v) = \frac{e}{c} \left[ \frac{\bm I}{\gamma(m\bm v)} - \frac{\bm v \otimes \bm v}{c^2 \ \gamma(m\bm v)^3} \right] \times^2 \bm B \\
& (\nabla_v \nabla_v \bm F(m\bm v))_{ijk} = \frac{e}{c} \epsilon_{kmn} \left[ -\frac{v_i \delta_{jm}}{\gamma(m\bm v)^3 c^2} - \frac{\delta_{im} v_j}{\gamma(m\bm v)^3 c^2} - \frac{\delta_{ij} v_m}{\gamma(m\bm v)^3 c^2} + \frac{3 v_i v_j v_m}{\gamma(m\bm v)^5 c^4}  \right] B_n
\end{align}
With these, the degree zero fluid system reads
\begin{align}
&\partial_t \bm v = -\frac{\bm v}{\gamma(m \bm v)} \cdot (\nabla \bm v) + \frac{e}{m} \left( \bm E + \frac{\bm v}{c \gamma(m \bm v)  } \times \bm B \right) \\
&\partial_t n = - \nabla \cdot \left( \frac{\bm v}{\gamma(m \bm v) } n \right) \\ \notag
\end{align}
The above fluid equations together with the Maxwell equations  (\ref{eq:maxwell_e}-\ref{eq:maxwell_divb}) are known as cold plasma relativistic fluid model in the literature \cite{Tronci_2010, spencer, mukhamet2025, warpx}.

Next, we use the coefficients to obtain the degree one system
\begin{align}
&\partial_t \bm v = -\frac{\bm v}{\gamma(m \bm v)} \cdot (\nabla \bm v) + \frac{e}{m} \left( \bm E + \frac{\bm v}{c \gamma(m \bm v)  } \times \bm B \right) \\
&\partial_t n = - \nabla \cdot \left( \frac{\bm v}{\gamma(m \bm v) } n + \frac{\bm P}{m \gamma(m \bm v) } - \frac{(\bm P \cdot \bm v) \bm v}{\gamma(m \bm v)^ 3 m c^2} \right) \\ \notag
&\partial_t \bm P = - \left( \bm P \otimes \frac{\bm v}{\gamma(m \bm v)} \right) \cdot \nabla - m \left( \frac{\bm P}{m \gamma(m \bm v)} - \frac{(\bm P \cdot \bm v) \bm v}{\gamma(m \bm v)^3 m c^2} \right) \cdot \nabla \bm v  \\ \notag
& \hspace{1cm} + \frac{e}{cm} \left( \frac{\bm P}{\gamma(m \bm v)} - \frac{(\bm P \cdot \bm v) \bm v}{\gamma(m \bm v)^3 c^2} \right) \times \bm B \\ \notag
\end{align}

And similarly the degree two system
\begin{align} \label{eq:deg_two_rel_v}
&\partial_t \bm v = -\frac{\bm v}{\gamma(m \bm v)} \cdot (\nabla \bm v) + \frac{e}{m} \left( \bm E + \frac{\bm v}{c \gamma(m \bm v)  } \times \bm B \right) \\
&\partial_t n = - \nabla \cdot \left( \frac{\bm v}{\gamma(m \bm v) } n + \frac{\bm P}{m \gamma(m \bm v) } - \frac{(\bm P \cdot \bm v) \bm v}{\gamma(m \bm v)^ 3 m c^2} - \frac{\bm v \cdot \bm S}{\gamma(m \bm v)^3 m^2 c^2} - \frac{\bm S \cdot \bm v}{\gamma(m \bm v)^3 m^2 c^2} \right. \\ \notag 
& \hspace{2cm} \left. \frac{3 \bm v \ (\bm v \cdot \bm S \cdot \bm v)}{\gamma(m \bm v)^5 m^2 c^4} - \frac{(\bm S : \bm I) \bm v}{\gamma(m \bm v)^3 m^2 c^2} \right) \\ \notag
&\partial_t \bm P = - \left( \bm P \otimes \frac{\bm v}{\gamma(m \bm v)} \right) \cdot \nabla - m \left( \frac{\bm P}{m \gamma(m \bm v)} - \frac{(\bm P \cdot \bm v) \bm v}{\gamma(m \bm v)^3 m c^2} \right) \cdot \nabla \bm v - \left( \frac{\bm S}{m \gamma(m \bm v)} - \frac{(\bm S \cdot \bm v) \bm v}{\gamma(m \bm v)^3 m c^2} \right) \cdot \nabla \\ \notag
& \hspace{1.cm} + \frac{m}{2} \left( \frac{\bm v \cdot \bm S}{\gamma(m \bm v)^3 m^2 c^2} +\frac{\bm S \cdot \bm v}{\gamma(m \bm v)^3 m^2 c^2}  + \frac{(\bm S: \bm I) \bm v}{\gamma(m \bm v)^3 m^2 c^2} - \frac{ 3 \bm v (\bm v \cdot \bm S \cdot \bm v) }{\gamma(m \bm v)^5 m^2 c^4 } \right) \cdot \nabla \bm v \\ \label{eq:deg_two_rel_S}
& \hspace{1cm} + \frac{e}{cm} \left( \frac{\bm P}{\gamma(m \bm v)} - \frac{(\bm P \cdot \bm v) \bm v}{\gamma(m \bm v)^3 c^2} \right) \times \bm B \\ \notag
& \hspace{1cm} - \frac{e}{2 cm} \left[ \frac{\bm v \cdot \bm S}{\gamma(m \bm v)^3 m c^2} +\frac{\bm S \cdot \bm v}{\gamma(m \bm v)^3 m c^2} + \frac{(\bm S : \bm I) \bm v}{\gamma(m \bm v)^3 m c^2} - \frac{3 \bm v(\bm v \cdot \bm S \cdot \bm v)}{\gamma(m \bm v)^5 m c^4} \right] \times \bm B \\
&\partial_t \bm S = - \left( \bm S \otimes \frac{\bm v}{\gamma(m \bm v)} \right) \cdot \nabla \\ \notag
& \hspace{1cm}  - \bm S \cdot \left[ \frac{\bm I}{\gamma(m \bm v)} - \frac{\bm v \otimes \bm v}{\gamma(m \bm v)^3 c^2} \right] \cdot (\nabla \bm v)  - (\bm v \nabla) \cdot \left( \frac{\bm I}{\gamma(m \bm v)} - \frac{\bm v \otimes \bm v}{\gamma(m \bm v)^3 c^2} \right) \cdot \bm S \\ \notag
& \hspace{1cm}   + \frac{e}{cm} \left[ \frac{\bm S \times^1 \bm B}{\gamma(m \bm v)} - \frac{(\bm v \times \bm B) \otimes (\bm v \cdot \bm S)}{\gamma(m \bm v)^3 c^2} \right]  + \frac{e}{cm} \left( \frac{\bm S \times^2 \bm B}{\gamma(m \bm v)} - \frac{(\bm S \cdot \bm v) \otimes (\bm v \times \bm B) }{\gamma(m \bm v)^3 c^2} \right)
\end{align}
The model is coupled to the Maxwell equations via the Amp\`ere equation \eqref{eq:maxwell_e} using current $\bm J$. Explicitly for the degree-two fluid model the current $\bm J$ is given by 
\begin{align}
\bm J &= \frac{e}{m} \int \frac{\bm u}{\gamma(\bm u)} f \, du = \frac{e}{m} \left[ mn \frac{\bm v}{\gamma(\bm v)} + \frac{\bm P}{\gamma(\bm v)}-\frac{(\bm v \cdot \bm P)}{c^2 \gamma(\bm v)^3} \bm v - \frac{\bm S \cdot \bm v}{2 \gamma(\bm v)^3 m c^2} \right. \\ \notag
& \hspace{4.2cm} \left. - \frac{\bm v \cdot \bm S}{2 \gamma(\bm v)^3 m c^2} - \frac{\bm v (\bm I : \bm S)}{2 \gamma(\bm v)^3 m c^2} + \frac{3 \bm v (\bm v \cdot \bm S \cdot \bm v)}{2 \gamma(\bm v)^5 m c^4} \right]
\end{align}
The current for the degree one model is obtained by setting $\bm S=0$. The current for the degree zero model is obtained by setting $\bm S=0$ and $\bm P=0$. These relativistic models reduce to the non-relativistic models in the limit $v/c \rightarrow 0$.

Similarly we can obtain the phase space models (particle models). The degree zero model reads 
\begin{align} \label{eq:particle_1}
\partial_t \bm X_k(t) &= \frac{\bm U_k}{m \gamma(\bm U_k)} \\ \label{eq:particle_2}
\partial_t \bm U_k(t) &= e \left( \bm E(\bm X_k, t)+ \frac{\bm U_k \times \bm B(\bm X_k, t)}{cm \gamma(\bm U_k)} \right)
\end{align}

To derive the degree one model, in addition to (\ref{eq:particle_1}-\ref{eq:particle_2}) we need an equation for $\partial_t \bm W^1$:
\begin{align}
\partial_t \bm W^1 = \begin{bmatrix}
\bm W_u^1 \\
\bm W_x^1
\end{bmatrix}^\top
\begin{bmatrix}
\nabla_u \bm A_u & \nabla_u \bm A_x \\
\nabla_x \bm A_u & \nabla_x \bm A_x
\end{bmatrix}
\end{align}
after calculating gradients, the degree one model reads
\begin{align}
\partial_t \bm X_k(t) &= \frac{\bm U_k}{m \gamma(\bm U_k)} \\ 
\partial_t \bm U_k(t) &= e \left( \bm E(\bm X_k, t)+ \frac{\bm U_k \times \bm B(\bm X_k, t)}{cm \gamma(\bm U_k)} \right) \\
\partial_t \bm W^1 &= \begin{bmatrix}
\frac{e}{cm} \left[ \frac{\bm W_u^1 \times \bm B}{\gamma(\bm U_p)} - \frac{(\bm W_u^1 \cdot \bm U_p) (\bm U_p \times \bm B)}{m^2 c^2 \gamma(\bm U_p)^3} \right] + e \left[\bm W_x^1 \cdot \nabla \bm E - \frac{(\bm W_x^1 \cdot \nabla \bm B) \times \bm U_p }{m \, c \, \gamma(\bm U_p)} \right] \\
\frac{1}{m} \left[ \frac{\bm W_u^1}{\gamma(\bm U_p)} - \frac{\bm U_p \, (\bm W_u^1 \cdot \bm U_p)}{m^2 \, c^2 \, \gamma(\bm U_p)^3} \right]
\end{bmatrix}
\end{align}

We omit the equations for the particle model of degree two owing to its complexity.
These relativistic models reduce to the non-relativistic models in the limit $\bm U_p/(mc) \rightarrow 0$.

The model is coupled to the Maxwell equations via the current $\bm J$ in the distribution version of the Amp\`ere's equation \eqref{eq:maxwell_e}. Explicitly for the degree-one particle model the current $\bm J$ is given by
\begin{align} 
\bm J = \frac{e}{m} \int \frac{\bm u}{\gamma(\bm u)} f \, du &= \frac{e}{m} W^0 \frac{\bm U_p}{\gamma(\bm U_p)} \delta(\bm x-\bm X_p) \\ \notag
& + \frac{e}{m}\left[ \frac{\bm W_u}{\gamma(\bm U_p)} - \frac{\bm U_p \left( \bm W_u \cdot \bm U_p \right)}{m^2 \, c^2 \, \gamma(\bm U_p)^3} \right] \delta(\bm x-\bm X_p) - \frac{e}{m} \bm W_x \cdot \nabla_x \delta(\bm x- \bm X_p) \frac{\bm U_p}{\gamma(\bm U_p)}
\end{align}
This needs to be regularized just like the current \eqref{eq:regul_current} was regularized using convolution over the Gaussian kernel. The current for the hybrid particle-fluid model is obtained by summing together the fluid and particle components of the current. Numerical discretization of the hybrid model of degree zero was considered in \cite{mukhamet2025, warpx}.

These fluid and particle models preserve total mass $\int f \, dx \, du$, energy $\int f \left( \gamma(\bm u)- 1 \right) m c^2 \, dx \, du + \frac{1}{8 \pi} \int \left( \bm E^2 + \bm B^2 \right) \, dx $, and momentum $\int f \, \bm u \, dx \, du + \frac{1}{4 \, \pi \, c} \int \left( \bm E \times \bm B \right) dx$. Let us show for example conservation of total energy.

\begin{proposition}
Let $f$ has a compact support and be a distributional solution of the relativistic Vlasov--Maxwell system. Then $f$ conserves total energy.
\end{proposition}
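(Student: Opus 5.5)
The plan is to test the distributional form of the relativistic Vlasov equation \eqref{eq:rel_maxwell_f} against the kinetic energy density $\varepsilon(\bm u) := (\gamma(\bm u)-1)mc^2$, localized in time. Then I would combine the result with Poynting's theorem, which follows from the Maxwell equations (\ref{eq:maxwell_e}--\ref{eq:maxwell_b}).

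The function $\varepsilon$ is smooth but not compactly supported. Because $f$ has compact support in $(\bm u,\bm x)$, I would take the test function $\phi(\bm u,\bm x,t)=\varepsilon(\bm u)\,\chi(\bm u,\bm x)\,\psi(t)$. Here $\chi\in C^\infty_0$ equals $1$ on an $\varepsilon$-neighbourhood of the support of $f$ on the time interval considered, and $\psi\in C^\infty_0((0,T))$ is arbitrary. By the same argument as in Lemma \ref{prop:lemma_delta_fluid} and \eqref{eq:fluid_multiple_derivatives}, all derivatives of $\chi$ vanish where $f$ lives, so $\chi$ drops out of every term.

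I would treat the three terms separately. The time term gives $-\int f\,\varepsilon\,\psi'\,dV$, that is, the distributional time derivative of the kinetic energy $K(t)=\int f\varepsilon\,dx\,du$. The spatial transport term gives $-\int f\,\bm G\cdot\nabla_x(\varepsilon\chi)\psi\,dV=0$, since $\varepsilon$ does not depend on $\bm x$. The force term gives $-\int f\,\bm F\cdot\nabla_u\varepsilon\,\psi\,dV$. I would use $\nabla_u\varepsilon=\bm u/(m\gamma(\bm u))=\bm G$, so the magnetic part $(\bm u\times\bm B)\cdot\bm u=0$ vanishes, leaving $-\int e\,\bm E\cdot\frac{\bm u}{m\gamma}f\,\psi\,dV=-\int \bm E\cdot\bm J\,\psi\,dx\,dt$ by the definition of $\bm J$. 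The result is $\frac{d}{dt}K=-\int\bm E\cdot\bm J\,dx$ in $\mathcal D'(0,T)$.

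Next I would dot \eqref{eq:maxwell_e} with $\bm E$ and \eqref{eq:maxwell_b} with $\bm B$, subtract, and use $\bm B\cdot\nabla\times\bm E-\bm E\cdot\nabla\times\bm B=\nabla\cdot(\bm E\times\bm B)$. This gives $\frac{1}{8\pi}\partial_t(\bm E^2+\bm B^2)+\frac{c}{4\pi}\nabla\cdot(\bm E\times\bm B)=-\bm E\cdot\bm J$. Integrating over $\Omega_x$ with periodic or decaying boundary conditions removes the divergence. Adding this to the kinetic identity gives $\frac{d}{dt}$ of the total energy $=0$ in the distributional sense, hence the total energy is constant.

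The main obstacle is justifying $\bm E\cdot\bm J$ when $f$ is a distribution, for instance the ansatz \eqref{eq:burby_ansatz} or \eqref{eq:particle_ansatz}. In that case $\bm J$ contains $\delta$'s and the pairing with $\bm E$ must be read as $\langle f,\bm F\cdot\nabla_u\varepsilon\,\psi\rangle$, which is well defined because $\bm E$ is smooth. For the particle case I would use the regularized current \eqref{eq:regul_current} consistently in both the Vlasov pairing and Amp\`ere's law, so the same $\bm E\cdot\bm J$ term cancels. If the kernel is applied only in Amp\`ere's law, the energy must be redefined with the smoothed field.
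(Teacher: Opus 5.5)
Your argument is the paper's argument: you test the Vlasov equation against $(\gamma(\bm u)-1)mc^2$ times a cutoff equal to $1$ near the support of $f$ and a time function. The $\bm x$-transport term and the magnetic work drop out, the remainder is recognized as $\bm E\cdot\bm J$, and it is cancelled against Poynting's theorem with the boundary flux discarded.

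However, there is a sign slip that, as written, breaks the final cancellation. Your three individual terms are computed correctly. But the weak equation says they sum to zero, so the time term equals \emph{minus} the force term:
\[
\langle K',\psi\rangle=-\int f\,\varepsilon\,\psi'\,dV=+\int f\,\bm F\cdot\nabla_u\varepsilon\,\psi\,dV=+\int \bm E\cdot\bm J\,\psi\,dx\,dt .
\]
Hence $\frac{d}{dt}K=+\int\bm E\cdot\bm J\,dx$ (the field does work on the charges), not $-\int\bm E\cdot\bm J\,dx$. With the sign you state, adding your (correct) Poynting identity $\frac{d}{dt}\frac{1}{8\pi}\int(\bm E^2+\bm B^2)\,dx=-\int\bm E\cdot\bm J\,dx$ gives $-2\int\bm E\cdot\bm J\,dx$ rather than $0$. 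With the corrected sign the two contributions cancel exactly as in the paper's final display, and your ``zero distributional derivative, hence constant'' conclusion matches the paper's continuity argument.

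Your closing discussion of $\delta$-type $f$ and the regularized current goes beyond what the paper proves. Energy balance there requires the smoothing to enter both the current in Amp\`ere's law and the force felt by the particles (for a symmetric kernel the two pairings then coincide). Your last sentence, that one can simply redefine the energy when the kernel enters only Amp\`ere's law, is not justified as stated.
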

\begin{proof}
By the assumption, $f$ satisfies
\begin{align} \label{eq:rel_vlasov}
\int_{\Omega} \partial_t f \phi \, dV = -\int_{\Omega} \nabla \cdot \left( \frac{\bm u}{m \, \gamma(\bm u)} f \right) \phi \, dV - \int_{\Omega} \nabla_u \cdot \left( e \left( \bm E + \frac{\bm u}{cm \, \gamma(\bm u)} \times \bm B \right) f \right) \phi \, dV \hspace{1cm} \forall \phi \in C^{\infty}_0(\Omega)
\end{align}
In \eqref{eq:rel_vlasov}, setting $\phi = (\gamma(\bm u) - 1) m \, c^2 \, \phi_x \, \phi_u \, \phi_t$, where $\phi_u \in C^\infty_0(\Omega_u)$, $\phi_x \in C^\infty_0(\Omega_x)$, $\phi_t \in C^\infty_0([0,T])$, and applying integration by parts we find
\begin{align}
\int \partial_t f (\gamma(\bm u)-1) mc^2 \phi_x \, \phi_u \, \phi_t \, dV &= \int \left( \frac{\bm u}{m \gamma(\bm u)} f \right) (\gamma(\bm u)-1) mc^2 \, \cdot  \nabla_x \phi_x \, \phi_u \, \phi_t \, dV \\ \notag
& + \int e \left( \bm E + \frac{\bm u}{cm \gamma(\bm u)} \times \bm B \right)f \cdot \frac{\bm u}{m \gamma(\bm u)} \phi_x \, \phi_u \, \phi_t \, dV  \\ \notag
& + \int e \left( \bm E + \frac{\bm u}{cm \gamma(\bm u)} \right) f (\gamma(\bm u)-1) mc^2 \cdot  \phi_x \nabla_u \phi_u \phi_t \, dV
\end{align}
Choosing $\phi_x$ and $\phi_u$ that are $1$ in the support of $f$, the terms containing gradients vanish and we obtain
\begin{align} \label{eq:prop_rel_vm}
\int \partial_t f (\gamma(\bm u)-1) mc^2 \phi_t \, dV &= \int e \left( \bm E + \frac{\bm u}{cm \gamma(\bm u)} \times \bm B \right)f \cdot \frac{\bm u}{m \gamma(\bm u)} \, \phi_t \, dV  
\end{align}
Now we consider the energy rate
\begin{align}
\frac{dH}{dt}  = \int \partial_t f \left( \gamma(\bm u)- 1 \right) m c^2 \, dx \, du + \frac{1}{8 \pi} \int \left( \partial_t \bm E^2 + \partial_t \bm B^2 \right) \, dx 
\end{align}
from the Maxwell equations, we have the identity
\begin{align} \label{eq:prop_maxwell_simpl}
\frac{1}{8 \pi} \int \left( \partial_t \bm E^2 + \partial_t \bm B^2 \right) dx = \frac{1}{4 \pi} \int \left[ c \nabla \cdot \left( \bm B \times \bm E\right) - 4 \pi \bm J \cdot \bm E \right] \, dx  
\end{align}
the term  $c \nabla \cdot \left( \bm B \times \bm E\right)$ gives boundary term (Green's theorem) and vanishes. 
Using \eqref{eq:prop_maxwell_simpl} we have
\begin{align}
\frac{dH}{dt}  = \int \partial_t f \left( \gamma(\bm u)- 1 \right) m c^2 \, dx \, du - \int \bm J \cdot \bm E \, dx
\end{align}
Using \eqref{eq:prop_rel_vm}
\begin{align}
\int \frac{dH}{dt} \phi_t \, dt  = \int e \left( \bm E + \frac{\bm u}{cm \gamma(\bm u)} \times \bm B \right)f \cdot \frac{\bm u}{m \gamma(\bm u)} \, \phi_t \, dV - \frac{e}{m} \int \frac{\bm u}{\gamma(\bm u)}f \cdot \bm E \phi_t dV = 0
\end{align}
Since $H$ is continuous in time, we deduce $\frac{dH}{dt} =0$.
\end{proof}

\section{Acknowledgement}
We acknowledge financial support from the Deutsche Forschungsgemeinschaft (DFG) via the Collaborative Research Center SFB1491 Cosmic Interacting Matters—From Source to Signal.

\bibliographystyle{plain}


\bibliography{paper}

\end{document}